\tikzset{
	>=stealth, 
	node distance=2cm, 
	every node/.style={scale=0.7},
	every state/.style={thick, fill=gray!10}, 
	initial text=$ $, 
	elliptic state/.style={draw,ellipse}
}
\newtheorem{theorem}{Theorem}
\newtheorem{lemma}[theorem]{Lemma}
\newtheorem{definition}[theorem]{Definition}
\theoremstyle{remark}
\newtheorem{example}{Example}
\title{Towards Generalised Half-Duplex Systems\footnote{This work has been supported by the French government, through the EUR DS4H Investments in the Future project managed by the National Research Agency (ANR) with the reference number ANR-17-EURE- 0004.}}
\author{Cinzia {Di Giusto}  \qquad Loïc Germerie Guizouarn \qquad Etienne Lozes
\institute{Université Nice Côte d'Azur\\CNRS, I3S\\Sophia Antipolis, France}
\email{\{cinzia.di-giusto,loic.germerie-guizouarn,etienne.lozes\}@univ-cotedazur.fr}
}
\begin{document}

\maketitle              

\begin{abstract}

FIFO automata are finite state machines communicating through FIFO queues.
They can be used for instance to model distributed protocols.
Due to the unboundedness of the FIFO queues, 
several verification problems are undecidable for these systems.
In order to model-check such systems, one may look for
decidable subclasses of FIFO systems.
Binary half-duplex systems are systems of two FIFO automata exchanging over a half-duplex channel.
They were studied by Cece and Finkel who established the
decidability in polynomial time of several properties.
These authors also identified some problems in generalising half-duplex systems to multi-party 
communications.
We introduce greedy systems, as a candidate to generalise binary half-duplex systems.
We show that greedy systems retain the same good properties as binary half-duplex systems, and 
that, in the setting of mailbox communications, greedy systems are quite closely related to a multiparty 
generalisation of half-duplex systems.
\end{abstract}

\section{Introduction}\label{sec:introduction}

FIFO automata, also known as asynchronous communicating automata (i.e., finite state automata that  exchange messages via FIFO queues) are an interesting formalism for modeling distributed protocols. In their most general formulation, these automata are Turing powerful, and in order to be able to model check them it is necessary to reduce their expressiveness. 


Binary \halfduplex systems, introduced by Cece and Finkel~\cite{DBLP:journals/iandc/CeceF05}, are systems with two participants and a bidirectional channel formed of two FIFO queues, such that communication happens only in one direction at a time. The stereotypical \halfduplex device is the walkie-talkie (or the CB). In several applications, in particular when
FIFO buffers are bounded and sends may be blocking,
\halfduplex communications are considered a good practice to avoid send-send deadlocks. Language support for enforcing this discipline of communication includes, for instance, binary session types~\cite{DBLP:conf/concur/Honda93,DBLP:conf/esop/HondaVK98} or Sing\# channel contracts~\cite{DBLP:conf/eurosys/FahndrichAHHHLL06,DBLP:conf/wsfm/LozesV11}.

In~\cite{DBLP:journals/iandc/CeceF05}, Cece and Finkel
show that (1) whether a system is half-duplex is decidable in 
polynomial time, (2) the set of reachable configurations is 
regular, and (3) properties like progress and 
boundedness are decidable in polynomial time. Cece and Finkel also present two possible 
notions \textquote{multiparty half-duplex} systems generalizing their 
class to systems of any number of machines for 
p2p communications (one FIFO queue per pair of machine).

The first generalisation involves assuming that
at most one queue over all queues is 
non-empty at any time. This generalisation preserves 
decidability but is very restrictive.
The second generalisation restricts the communications between each pair of participants to \halfduplex communications, that is only one buffer per bidirectional channel can be used simultaneously. This generalisation however does not preserve decidability: the systems captured by this definition form a Turing powerful class. In fact, with just three machines it is possible to mimic the tape of a Turing machine.

It could be believed that these results end the discussion
about multi-party half-duplex systems. In this work, we
claim conversely that there is another natural and 
relevant notion of multi-party half-duplex communications
that allows us to generalise the results of Cece and Finkel.
We introduce \emph{greedy systems}, which are systems for which
all executions can be rescheduled in such a way that
all receptions are immediately preceded by their corresponding
send. This notion is quite natural, and
closely related to other notions like
synchronisability~\cite{DBLP:journals/tcs/BasuB16},
$1$-synchronous systems~\cite{DBLP:conf/cav/BouajjaniEJQ18},
or existentially $1$-bounded 
system~\cite{DBLP:journals/fuin/GenestKM07} (see Section~\ref{sec:conclusion} for a detailed discussion).


In this work, we establish the following results:
\begin{enumerate}
    \item whether a system is greedy is decidable in polynomial 
    time (when the number of processes is fixed);
    \item for greedy systems, all regular safety properties,
    which includes reachability, absence of unspecified 
    receptions, progress, 
    and boundedness are decidable in polynomial time.
    \item we generalize binary half-duplex systems to multiparty mailbox half-duplex systems and
    we show that (1) mailbox half-duplex systems are greedy, and (2) greedy systems
    without orphan messages, at least in the binary case, are half-duplex.
    
\end{enumerate}
The first result follows from techniques developed
by Bouajjani \emph{et al}~\cite{DBLP:conf/cav/BouajjaniEJQ18} for 
$k$-synchronous systems. The main challenge here is that
we address a more general model of communicating systems
that encompasses both mailbox and p2p communications, but also
allows any form of sharing of buffers among processes.
The second result is based on an approach that, to
the best of our knowledge, is new, although it borrows 
from some
general principles from regular model-checking.
The challenge is that, unlike for binary half-duplex 
systems, the reachability set of greedy systems is not regular,
which complicates how automata-based techniques can be used
to solve regular safety.
The third contribution aims at answering, although
incompletely, the question we would like to address with
this work: what is a relevant notion of multi-party
half-duplex systems?

\paragraph{Outline.}
The paper is organised as follows: 
Section~\ref{sec:preliminaries} introduces communicating 
automata and systems.
Section~\ref{sec:greedy-def} defines greedy systems and 
establishes the decidability of the greediness of a system.
Section~\ref{sec:model-checking} discusses regular
safety for greedy systems.
Section~\ref{sec:mailbox-hd} compares greedy systems
and half-duplex systems, first in the binary setting,
then in the multi-party setting, by introducing
the notion of mailbox half-duplex systems.
Finally, Section~\ref{sec:conclusion} concludes with some final remarks and discusses related works.

\section{Preliminaries} \label{sec:preliminaries}

For a finite set $S$, $S^*$ denotes the set of finite words over $S$,
$w_1\cdot w_2$ denotes the concatenation of two words, $|w|$ denotes the length of $w$, and $\epsilon$ denotes the empty word. We assume some familiarity with non-deterministic
finite state automata,
and we write $\languageof{\A}$ for the language accepted
by the automaton $\A$. For two sets $S$ and $I$,
we write $\vect{b}$ (in bold) for an element of $S^I$,
and $b_i$ for the $i$-th component of $\vect{b}$, so that
$\vect{b}=(b_i)_{i\in I}$.

A \emph{FIFO automaton} is basically a finite state machine equipped with 
FIFO queues where transitions are labelled with either queuing or dequeuing actions. More precisely:
\begin{definition}[FIFO automaton]
  A \emph{FIFO automaton} is a tuple\footnote{
Note that FIFO automata do not have accepting states, therefore they are
not a special case of non-deterministic finite state automaton, and
there is not such a thing as "the language of a FIFO automaton".
  }  
  $\A=(L,\paylodSet,I,\actionSet,\delta,l_0)$
  where
(1) $L$ is a finite set of \emph{control states},
(2) $\paylodSet$ is a finite set of \emph{messages},
(3) $I$ is a finite set of \emph{buffer identifiers},
(4) $\actionSet\subseteq I\times\{!,?\}\times\paylodSet$ is a finite set of \emph{actions},
(5) $\delta\subseteq L\times \actionSet \times L$ is the \emph{transition relation}, and
(6) $l_0\in L$ is the \emph{initial control state}.
The \emph{size} $|\A|$ of $\A$ is $|L|+|\delta|$.
\end{definition}

Actions $a=(i,\dag,\amessage)$,  for $\dag \in \{!,?\}$ are also 
denoted by $i\dag\amessage$. 
Given a FIFO automaton $\A=(L,\paylodSet,I,\delta, l_0)$, a \emph{configuration} of $\A$ is a tuple $\aconf=(l,\vect{b})\in L\times \paylodSet^I$. The initial configuration 
is $\initconf=(l_0,\emptybuffers)$ where for all $i\in I$,
$\emptybufferat{i}=\epsilon$.
A \emph{step} is a tuple 
$(\aconf,a,\aconf')$ (often written
$\transition{\aconf}{a}{\aconf'}{\A}$)
where $\aconf=(l,\vect{b})$ and $\aconf'=(l',\vect{b}')$ are configurations and $a$ is an action,
such that the following holds:
\begin{itemize}
  \item $(l,a,l')\in\delta$,
  \item if $a=i!\amessage$, then $b'_i=b_i\cdot \amessage$
  and $b_j'=b_j$ for all $j\in I\setminus\{i\}$.
  \item if $a=i?\amessage$, then $b_i=\amessage\cdot b_i'$
  and $b_j'=b_j$ for all $j\in I\setminus\{i\}$.
\end{itemize}

Next, we define systems of FIFO automata. We pick
a very general definition, where each FIFO queue
might be queued (resp. dequeued) by more than one automaton,
and where an automaton might \textquote{send a message to itself}.
Most of the theory can be done in this general setting without
extra cost, but we merely have in mind either mailbox
systems or p2p systems (defined below).

A FIFO system, later called simply a \emph{system}, 
is a family $\system=(\A_p)_{p\in\procSet}$ of FIFO automata 
such that all FIFO automata have 
disjoint sets of actions: for all $p\neq q\in\procSet$, $\actionSet_p\cap\actionSet_q=\emptyset$. 
Each $p\in\procSet$ is referred to as a \emph{process}. The condition on the disjointness 
of the sets of actions helps to identify the process that is responsible for a given action: for
an action $a$, we write $\processof{a}$ to denote
the unique $p$ (when it exists) such that $a\in\actionSet_p$.

Let us now define mailbox and p2p systems.
Informally, a p2p system is a system in which each pair of processes has a dedicated buffer for exchanging messages.
Instead, for  mailbox communication,  each process receives messages from all other processes in a single buffer.
Let us now formally define these notions.
To this aim, it will be useful to identify what are the buffers an automaton queues in (resp. dequeues from). 
Hence, for a given FIFO automaton 
$\A_p=(L_p,\paylodSet_p,I_p,\actionSet_p,\delta_p,l_{0,p})$,
and for $\dag\in\{!,?\}$,
we write $I_p^\dag$ for the set of buffer identifiers $i$
such that there exists $\amessage\in\paylodSet_p$
such that $i\dag\amessage\in\actionSet_p$. 
A system $\system=(\A_p)_{p\in\procSet}$ is \emph{p2p} if
for all $p\in\procSet$, 
$I_p \subseteq \procSet^2$, $I^!_p=\{p\}\times(\procSet\setminus\{p\})$, and $I^?_p=(\procSet\setminus\{p\})\times\{p\}$.
A system $\system=(\A_p)_{p\in\procSet}$ 
is \emph{a mailbox system} if for all $p\in\procSet$,
$I_p\subseteq \procSet$, 
$I_p^{!}=\procSet\setminus\{p\}$ and $I_p^{?}=\{p\}$.  
Thus in a p2p system  with $n$ processes, there are at most $n (n-1)$ buffers, and in a mailbox system with $n$ processes
there are at most $n$ buffers.
A \emph{binary system} is a system $\system=(\A_p)_{p\in\procSet}$ such that $\procSet=\{1,2\}$ and for all $p\in\procSet$,
$I_p^{!}=\{3-p\}=I_{3-p}^{?}$; note that a binary
system is both p2p and mailbox.
We sometimes use a more handy notation
for actions of a mailbox (resp. p2p) system:
if $\processof{i!\amessage}=p$ and
$\processof{i?\amessage}=q$, we sometimes write
$\send{p}{q}{\amessage}$ instead of $i!\amessage$
and $\rec{p}{q}{\amessage}$ instead of $i?\amessage$.

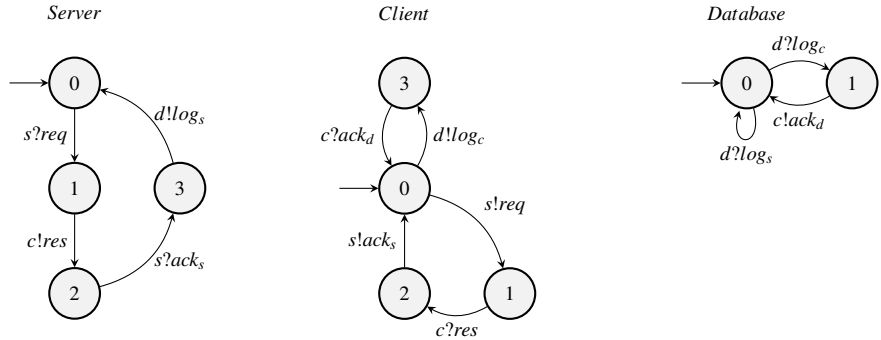
\begin{figure}
\centering
\begin{tikzpicture}
	\node (s) {$Server$};
	\node[state, initial, below=0.4cm of s] (s0) {$0$};
	\node[state, below of=s0] (s1) {$1$};
	\node[state, below of=s1] (s2) {$2$};
	\node[state, right of=s1] (s3) {$3$};
	
	\node[right=3.5cm of s] (c) {$Client$};
	\node[state, below=0.4cm of c] (c3) {$3$};
	\node[state, initial, below of=c3] (c0) {$0$};
	\node[state, below of=c0] (c2) {$2$};	
	\node[state, right of=c2] (c1) {$1$};

	\node[right=3.5cm of c] (d) {$Database$};
	\node[state, initial, below=0.4cm of d] (d0) {0};
	\node[state, right of=d0] (d1) {$1$};
	
	\tikzset{->,}
	\draw
	(s0) edge[left] node{$\crec{c}{s}{req}$} (s1)
	(s1) edge[left] node{$\csend{s}{c}{res}$} (s2)
	(s2) edge[right, bend right] node{$\crec{c}{s}{ack_s}$} (s3)
	(s3) edge[right, bend right] node{$\csend{s}{d}{log_s}$} (s0)
	
	(c0) edge[above right, bend left] node{$\csend{c}{s}{req}$} (c1)
	(c1) edge[below, bend left] node{$\crec{s}{c}{res}$} (c2)
	(c2) edge[left] node{$\csend{c}{s}{ack_s}$} (c0)
	(c0) edge[right, bend right] node{$\csend{c}{d}{log_c}$} (c3)
	(c3) edge[left, bend right] node{$\crec{d}{c}{ack_d}$} (c0)
	
	(d0) edge[loop below, below] node{$\crec{s}{d}{log_s}$} (d0)
	(d0) edge[above, bend left] node{$\crec{c}{d}{log_c}$} (d1)
	(d1) edge[below, bend left] node{$\csend{d}{c}{ack_d}$} (d0);
	
\end{tikzpicture}

\caption{Client/Server/Database protocol}\label{fig:ex:csd}
\end{figure}

\begin{example}[FIFO Automata]
Figure~\ref{fig:ex:csd} shows a graphical representation of 
a FIFO system, borrowed from~\cite{DBLP:journals/fmsd/AkrounS18}.
This system represents a protocol  between a client, a server 
and a database logging requests from the client and the server.
In this protocol, a client can log something on the database or  
send requests to the server, when those requests are satisfied 
the server logs them in a database. 
Each automaton is equipped with a buffer in which it receives 
messages from all other participants: this system is 
an example of a mailbox system. 
To improve readability of the graphical representation, we 
refer to the buffers with the initial of the automaton to 
which they are associated. For example, $s$ is the buffer into 
which the server can receive messages.
This simple system  will be used as a running example throughout the paper. \qed
\end{example}

The size $|\system|$ of $\system$ is the sum of
the $|\A_p|$. 
Note that $|\procSet|$ and $|I|$ are independent from the size of 
$\system$. In particular,
when we say that an algorithm is in polynomial time,
we mean in time $\bigO{|\system|^k}$ for some $k$
that may depend on $|\procSet|$ and $|I|$.

The FIFO automaton $\productof{\system}$ associated with 
$\system$ is the standard asynchronous product automaton: 
$\productof{\system}=(\Pi_{p\in\procSet}L_p, \bigcup_{p\in\procSet}\paylodSet_p,\bigcup_{p\in\procSet}I_p,\delta,\vect{l}_0)$ 
where $\vect{l}_0=(l_{0,p})_{p\in\procSet}$
and $\delta$ is the set of triples $(\vect{l},a,\vect{l}')$
for which there exists $p\in\procSet$ such that $(l_p,a,l_p')\in\delta_p$ and $l_q=l_q'$ for all $q\in\procSet\setminus\{p\}$.
We often identify $\system$ and $\productof{\system}$
and write for instance $\transition{}{a}{}{\system}$ instead of 
$\transition{}{a}{}{\productof{\system}}$. Similarly,
we say that $\aconf=(\vect{l},\vect{b})$
is a configuration of $\system$ while we mean 
that $\aconf$ is a configuration of $\productof{\system}$.
An  execution $e=a_1\cdots a_n\in\actionSet^*$ is a sequence of actions.
As usual $\xRightarrow{e}$ stands for $ \xrightarrow{a_1} \cdots \xrightarrow{a_n}$.
We write $\executionsof{\system}$ for $\{e\in\actionSet^*\mid \Transition{\aconf_0}{e}{\aconf}{\system}\mbox{ for some }\aconf \}$.

Next, we introduce the definition of reachable configuration:

\begin{definition}[Reachable configuration]\label{reachconf}
Let  $\system$ be a system. A configuration $\aconf$ is reachable if there exists $e\in\actionSet^*$ such that   $\Transition{\aconf_0}{e}{\aconf}{\system}$. The set of all reachable configurations of $\system$ is denoted  $\erress{\system}$.
\end{definition}

Given an execution $e=a_1\cdots a_n$, we say that
$\{j,j'\}\subseteq\{1,\cdots,n\}^2$ is a \emph{matching pair} if
there exists a buffer identifier $i$, a message $\amessage$ and
natural number $k$ such that (1) $a_j=i!\amessage$,
(2) $a_{j'}=i?\amessage$, (3) $a_j$ is the $k$-th send action
on $i$ in $e$, and (4) $a_{j'}$ is the $k$-th receive action
on $i$ in $e$.
A \emph{communication} of $e$ is either
a matching pair $\{j,j'\}$, or a singleton $\{j\}$ such that
$j$ does not belong to any matching pair (such a communication
is called unmatched). We write $\communicationsof{e}$ to denote the set of communications of $e$.

An execution imposes a total order on the actions. Sometimes,
however, it
is useful to visualise only the causal dependencies between 
actions. Message sequence
charts~\cite{msc-norm}
are usually used to this aim, as they only depict an order between matched pairs of actions and between actions of the same
process. However, message sequence charts do not represent graphically the causal dependencies due to shared buffers, like
the ones found in mailbox systems. 
Here we define \emph{action graphs} that depict all causal dependencies. When considering p2p communications, action graphs and message sequence charts coincide. 
We say that two actions $a_1,a_2$ commute if
$\processof{a_1}\neq\processof{a_2}$ and it is not the case
that $a_1$ and $a_2$ are two actions of the same type
on a same buffer:
there is no $\dag\in\{!,?\}$, $i\in I$ and 
$\amessage_1,\amessage_2\in\paylodSet$
such that $a_1=i\dag\amessage_1$ and $a_2=i\dag\amessage_2$.

\begin{definition}[Action graph]\label{def:agraph}
Given an execution
$e=a_1\cdots a_n$,  the 
\emph{action graph} $\actiongraphof{e}$ is the vertex-labeled
directed graph $(\{1,\ldots,n\},\prec_e,\lambda_e)$ where $\lambda_e(j)=a_j$ and $j\prec_{e} j'$ if (1) $j<j'$ and (2)
either $a_j,a_{j'}$ do not commute, or $\{j,j'\}$
is a matching pair. 
\end{definition}
When $j\prec_{e} j'$, we say that $a_j$ \emph{happens
before} $a_{j'}$.
Note that for an execution $e$ with $e\in\executionsof{\system}$,
$\prec_{e}^*$ is a partial order.
Two executions $e,e'$ are causally equivalent, denoted by $e\causalequiv{\system}e'$, if their action graphs
are isomorphic. Said differently, $e,e'$
are causally equivalent if they are two linearisations
of a same \emph{happens before} partial order.
Note that if $\Transition{\initconf}{e}{\aconf}{\system}$
and $e\causalequiv{\system} e'$, then
$\Transition{\initconf}{e'}{\aconf}{\system}$.
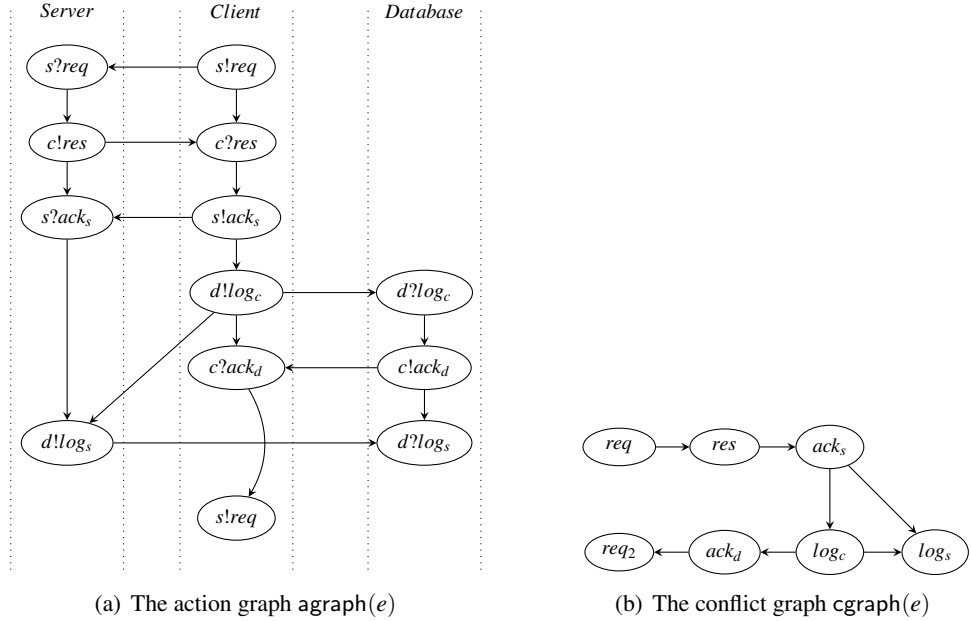
\begin{figure}
\centering
\subfigure[The action graph $\actiongraphof{e}$]{
\begin{tikzpicture}
	
	\def \xp {0.75}
	\def \xq {3}
	\def \xr {5.5}
	
	\def \ybot {-7.5}
	
	\node (c) at (\xp, 0) {$Server$};
	\node (s) at (\xq, 0) {$Client$};
	\node (d) at (\xr, 0) {$Database$};
	
	\node[elliptic state] (sreq1) at (\xq, -0.75) {$\csend{c}{s}{req}$};
	\node[elliptic state] (rreq1) at (\xp , -0.75) {$\crec{c}{s}{req}$};
	\node[elliptic state] (sres) at (\xp, -1.75) {$\csend{s}{c}{res}$};
	\node[elliptic state] (rres) at (\xq , -1.75) {$\crec{s}{c}{res}$};
	\node[elliptic state] (rack) at (\xp , -2.75) {$\crec{d}{s}{ack_s}$};
	\node[elliptic state] (sack) at (\xq , -2.75) {$\csend{c}{s}{ack_s}$};
 	\node[elliptic state] (slogc) at (\xq , -3.75) {$\csend{c}{d}{log_c}$};
	\node[elliptic state] (rlogc) at (\xr , -3.75) {$\crec{c}{d}{log_c}$};
	\node[elliptic state] (sackd) at (\xr , -4.75) {$\csend{d}{c}{ack_d}$};
	\node[elliptic state] (rackd) at (\xq , -4.75) {$\crec{d}{c}{ack_d}$};
	\node[elliptic state] (slog) at (\xp, -5.75) {$\csend{s}{d}{log_s}$};	
	\node[elliptic state] (rlog) at (\xr , -5.75) {$\crec{s}{d}{log_s}$};
	\node[elliptic state] (sreq2) at (\xq , -6.75) {$\csend{c}{s}{req}$};

	\draw[dotted]
	(0,0) -- (0,\ybot)
	(\xp + \xp,0) -- (\xp + \xp,\ybot) 
	(\xq - \xp,0) -- (\xq - \xp,\ybot)
	(\xq + \xp,0) -- (\xq + \xp,\ybot)
	(\xr - \xp,0) -- (\xr - \xp,\ybot)
	(\xr + \xp,0) -- (\xr + \xp,\ybot);
	
	\tikzset{->,}
	\draw
	(sreq1) edge node{} (rres)
	(rres) edge node{} (sack)
	
	(rreq1) edge node{} (sres)
	(sres) edge node{} (rack)
	(rack) edge node{} (slog)
	
	
	(sreq1) edge node{} (rreq1)
	(sres) edge node{} (rres)
	(sack) edge node{} (rack)
	
	(sack) edge node{} (slogc)
	(slogc) edge node{} (rackd)
	(slogc) edge node{} (rlogc)
	(rlogc) edge node{} (sackd)
	(sackd) edge node{} (rackd)
	
	(sackd) edge node{} (rlog)
	(slogc) edge node{} (slog)
	
	(rackd) edge[bend left] node{} (sreq2)
	
	(slog) edge node{} (rlog);
	

\end{tikzpicture}
}\qquad
\subfigure[The conflict graph $\conflictgraph{e}$ ]{
\quad
\begin{tikzpicture}
		
	\node[elliptic state ] (req)  {${\ req_{\ }}$};
	\node[elliptic state, right of =req] (res)  {${\ res_{\ }}$};

	\node[elliptic state, right of =res] (ack)  {${ack_s}$};
	\node[elliptic state, below of =ack] (log)  {${log_c}$};
	\node[elliptic state, right of =log] (logd)  {${log_s}$};
	\node[elliptic state, left of =log] (ackd)  {${ack_d}$};
	\node[elliptic state, left of =ackd] (req2)  {${req_2}$};

	\tikzset{->,}
	\draw
	(ack) edge node{} (log)
	(log) edge node{} (ackd)
	(log) edge node{} (logd)
	(req) edge node{} (res)
	(ack) edge node{} (logd)
	(ackd) edge node{} (req2)
	(res) edge node{} (ack);
\end{tikzpicture}
\quad
}
\caption{Causal dependencies of an execution of Client/Server/Database protocol}\label{fig:ex:agraphrunning}
\end{figure}

Another graphical tool that we will use to talk about equivalent executions is the \emph{conflict graph}, which is intuitively obtained from the action graph by merging
matching pairs of vertices. 

\begin{definition}[Conflict graph]
Given an execution
$e=a_1\cdots a_n$, the conflict graph $\conflictgraph{e}$ of the execution $e$
is the directed graph $(\communicationsof{e},\conflictedge{e})$ where for all communications
$\acom_1,\acom_2\in\communicationsof{e}$,
$\acom_1\conflictedge{e} \acom_2$ if
there is $j_1\in\acom_1$ and $j_2\in\acom_2$
such that $j_1\prec_{e} j_2$.
\end{definition}

\begin{example}[Action and Conflict Graphs] \label{ex:actconfgraph}
We go back to the system depicted in Figure \ref{fig:ex:csd}. One of its executions is
$$e = \csend{c}{s}{req}\cdot\crec{c}{s}{req}\cdot\csend{s}{c}{res} \cdot \crec{s}{c}{res} \cdot
\csend{c}{s}{ack_s} \cdot \crec{c}{s}{ack_s} \cdot \csend{c}{d}{log_c} \cdot
\csend{s}{d}{log_s} \cdot \crec{c}{d}{log_c} \cdot \csend{d}{c}{ack_d} \cdot \crec{d}{c}{ack_d} \cdot \csend{c}{s}{req} \cdot \crec{s}{d}{log_s}$$
Figure \ref{fig:ex:agraphrunning}(a) shows $\actiongraphof{e}$.
Actions of the same process are represented vertically between the same dotted lines. As formally explained in Definition \ref{def:agraph}, an arc from an action $a$ and another $a'$ means that $a$ happens before $a'$.
To ease readability, the arcs that follow from transitivity 
are omitted. For example, in a given column, there should be an arc between every pair of actions.

Figure \ref{fig:ex:agraphrunning}(b) shows $\conflictgraph{e}$. To simplify the graph, instead of marking the matching pairs we simply identify them with the message  exchanged. Message $req_2$ represent the second send of $req$ in the execution above.

\qed
\end{example}

\section{Greedy systems}\label{sec:greedy-def}
In this section we introduce greedy systems.
Those systems aim at mimicking rendez-vous or synchronous 
communications by checking whether each execution can be 
rescheduled to an equivalent one where all receptions 
immediately follow their corresponding send.

\begin{definition}[Greedy system]\label{def:onesynch}
    An execution $e$ is \emph{greedy} if all matching pairs
    are of the form $\{j,j+1\}$.
    A system $\system$ is \emph{greedy} if for all execution
    $e\in\executionsof{\system}$, there exists a greedy
    execution $e'$ such that $e\causalequiv{\system} e'$.
\end{definition}

\begin{example}
        The execution $s!req\cdot s?req\cdot c!res \cdot c?res \cdot s!ack_s \cdot d!log_c \cdot d?log_c$ is greedy,
        but the execution 
        $s!req\cdot s?req\cdot c!res \cdot c?res \cdot s!ack_s \cdot d!log_c \cdot s?ack_s$ is not greedy, although
        causally equivalent to a greedy execution.
        \qed
\end{example}

\begin{example}[Greedy system]\label{ex:greedysys}
    The system in Figure \ref{fig:ex:csd}  is greedy. Take for instance   execution $e$ of the example \ref{ex:actconfgraph}. This execution is not greedy as messages $log_c$, $log_s$, and $ack_d$ are not received right after their send. Still since those action can commute and by observing that the conflict graph in Figure~\ref{fig:ex:agraphrunning}(b) does not present any cycle (see Lemma~\ref{lem:greedy-graphical-caracterisation}) below), there exists an equivalent greedy execution $e'$:
    $$e' = \csend{c}{s}{req} \cdot \crec{c}{s}{req} \cdot \csend{s}{c}{res} \cdot \crec{s}{c}{res} \cdot
    \csend{c}{s}{ack_s} \cdot \crec{c}{s}{ack_s} \cdot \csend{c}{d}{log_c} \cdot
    \crec{c}{d}{log_c} \cdot \csend{d}{c}{ack_d} \cdot \crec{d}{c}{ack_d} \cdot \csend{c}{s}{req} \cdot \csend{s}{d}{log_s} \cdot \crec{s}{d}{log_s}.$$
    \qed
   \end{example}

\begin{example}\label{ex:counter-example-greedy}
Consider a system with two processes $p$ and $q$ each sending a message to the other, and whose corresponding receptions only happen  after the send
(see Figure \ref{fig:cgraphacyclic}). Then the execution
$e=p!\amessage_1\cdot q!\amessage_2 \cdot p?\amessage_1 \cdot q?\amessage_2$ is not causally equivalent to a greedy execution,
therefore the whole system is not greedy.
\qed
\end{example}

In the remainder of this section, 
we show that deciding whether a system
is greedy is feasible in polynomial time.
The proof is in three steps: first,
we give a graphical characterisation of the executions
that are causally equivalent to greedy executions;
second, we show that the non-greediness of a system
is revealed by the existence of \textquote{bad} executions
of a certain shape, called borderline executions.
Finally, we show that the graphical characterisation
can be exploited to show the regularity of the language
of borderline violations, from which we get the decidability of 
greediness.

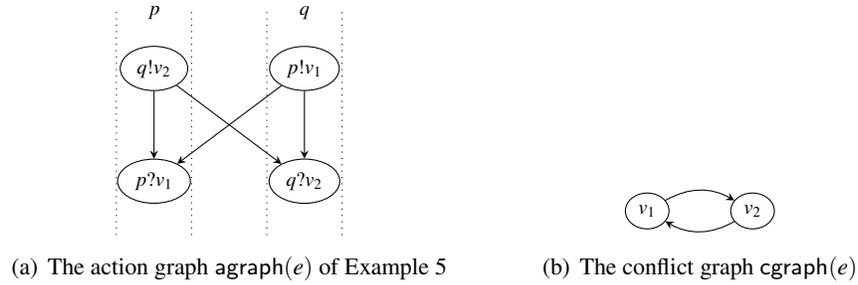
\begin{figure}
\centering
\subfigure[The action graph $\actiongraphof{e}$ of Example~\ref{ex:counter-example-greedy}]{
\qquad \qquad
\begin{tikzpicture}
	\node (p) at (0.5, 0) {$p$};
	\node (q) at (2.5, 0) {$q$};
	
	\node[elliptic state] (sv1) at (2.5,-0.75) {$\csend{q}{p}{\amessage_1}$};
	\node[elliptic state] (rv2) at (2.5,-2.25) {$\crec{p}{q}{\amessage_2}$};
	\node[elliptic state] (sv2) at (0.5,-0.75) {$\csend{p}{q}{\amessage_2}$};
	\node[elliptic state] (rv1) at (0.5,-2.25) {$\crec{q}{p}{\amessage_1}$};

	\draw[dotted]
	(2,0) -- (2,-3)
	(3,0) -- (3,-3)
	(0,0) -- (0,-3)
	(1,0) -- (1,-3);
	
	\tikzset{->,}
	\draw
	(sv1) edge node{} (rv2)
	(sv2) edge node{} (rv2)
	(sv2) edge node{} (rv1)
	(sv1) edge node{} (rv1);
\end{tikzpicture}
\qquad \qquad
} \qquad
\subfigure[The conflict graph $\conflictgraph{e}$]{
\qquad \quad
\begin{tikzpicture}
		
	\node[elliptic state] (sv1)  {${\amessage_1}$};
	\node[elliptic state, right of =sv1] (sv2)  {${\amessage_2}$};

	\tikzset{->,}
	\draw
	(sv1) edge[bend left] node{} (sv2)
	(sv2) edge[bend left] node{} (sv1);
\end{tikzpicture}
\qquad \quad
}
\caption{Visual representations of a non-greedy execution $e$}\label{fig:cgraphacyclic}
\end{figure}

\begin{lemma}\label{lem:greedy-graphical-caracterisation}
    $e$ is causally equivalent to a greedy 
    execution
    if and only if $\conflictgraph{e}$ is acyclic.
\end{lemma}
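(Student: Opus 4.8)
The plan is to prove both directions by connecting the acyclicity of the conflict graph to the existence of a greedy linearisation. The key observation is that a greedy execution corresponds exactly to a linearisation of the happens-before order in which each communication (matching pair or unmatched send) is executed "atomically," i.e., with the send and its receive adjacent. Since the conflict graph $\conflictgraph{e}$ has communications as its vertices, and greedy executions treat communications as atomic units, the natural bridge is a topological-sort argument on $\conflictgraph{e}$.

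For the direction ($\Leftarrow$), I would assume $\conflictgraph{e}$ is acyclic and construct a greedy execution $e'$ with $e \causalequiv{\system} e'$. Since $\conflictgraph{e}$ is a finite acyclic directed graph, it admits a topological ordering $\acom_1, \ldots, \acom_m$ of its vertices (the communications). I would then build $e'$ by concatenating, for each communication $\acom_k$ in this order, either the pair $a_j \cdot a_{j'}$ (if $\acom_k = \{j,j'\}$ is a matching pair, with $a_j$ the send and $a_{j'}$ the receive) or the single send $a_j$ (if $\acom_k = \{j\}$ is unmatched). By construction every matching pair in $e'$ is of the form $\{i, i+1\}$, so $e'$ is greedy. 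The main work is to verify two things: first, that $e'$ respects the happens-before order $\prec_e$, so that $\actiongraphof{e}$ and $\actiongraphof{e'}$ are isomorphic and hence $e \causalequiv{\system} e'$; and second, that $e'$ is actually an execution, i.e., a FIFO-valid sequence. The first follows because any $\prec_e$-edge between two actions in \emph{different} communications induces a $\conflictedge{e}$-edge in the same direction, which the topological order respects; edges \emph{within} a matching pair are respected since the send precedes the receive in the emitted block. The second (FIFO validity) should follow from the first together with the fact that $e'$ is a linearisation of a partial order that $e$ already linearised.

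For the direction ($\Rightarrow$), I would argue contrapositively: if $\conflictgraph{e}$ has a cycle, then no execution causally equivalent to $e$ can be greedy. The intuition is that in a greedy execution the communications are totally ordered as atomic blocks, and this total order must extend the conflict-graph reachability relation; a cycle in $\conflictgraph{e}$ makes such an extension impossible. Concretely, suppose $e \causalequiv{\system} e'$ with $e'$ greedy. Since causal equivalence preserves the action graph up to isomorphism, it preserves matching pairs and the happens-before relation, hence it preserves the conflict graph: $\conflictgraph{e}$ and $\conflictgraph{e'}$ are isomorphic. In the greedy execution $e'$, each communication occupies a contiguous block, so the blocks are linearly ordered by their position, and whenever $\acom_1 \conflictedge{e'} \acom_2$ the block of $\acom_1$ must precede the block of $\acom_2$ (since some action of $\acom_1$ happens before some action of $\acom_2$, and within greedy blocks all actions of one communication precede all actions of the later one). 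This linear order on blocks would then be a topological order of $\conflictgraph{e'}$, contradicting the existence of a cycle.

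The hard part will be the careful bookkeeping in the ($\Leftarrow$) direction verifying that the reconstructed sequence $e'$ is a genuine FIFO execution and not merely a permutation respecting the partial order. In particular I must check that emitting whole communications as blocks never violates the FIFO dequeuing discipline on a shared buffer — that is, that within each buffer the sends and receives still interleave correctly so that the $k$-th receive dequeues the $k$-th sent message. This relies on the fact that same-buffer same-direction actions do \emph{not} commute (they are ordered by $\prec_e$ and hence by $\conflictedge{e}$), so the topological order preserves their relative order; combined with each matching send-receive being adjacent, the FIFO invariant is maintained. I expect the subtlety of unmatched sends (singletons) and of buffers shared among several processes to require the most attention, since these are exactly the features distinguishing this general setting from plain p2p message sequence charts.
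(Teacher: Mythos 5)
Your proof is correct and takes essentially the same route as the paper's: the forward direction via the two observations that causally equivalent executions have isomorphic conflict graphs and that the conflict graph of a greedy execution is acyclic (its contiguous blocks being linearly ordered consistently with $\conflictedge{e}$), and the converse by emitting the communications as adjacent send--receive blocks along a topological order of $\conflictgraph{e}$ and checking that this permutation preserves the happens-before relation. Your additional care about FIFO validity of $e'$ (same-buffer same-direction actions do not commute, so their relative order is preserved) is sound but strictly speaking not required in the paper's setting, where an execution is any action sequence and feasibility transfers along causal equivalence, as noted after Definition~\ref{def:agraph}.
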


\begin{proof}
    The left to right implication follows from two observations:
    first, two causally equivalent executions have 
    isomorphic conflict graphs (because they have isomorphic
    action graphs), and second,
    the
    conflict graph of a greedy execution is acyclic, because 
    for a greedy execution
    $\acom_1\conflictedge{e} \acom_2$ induces
    $\min{\acom_1}<\min{\acom_2}$. Conversely,
    let $e=a_1\cdots a_n$, and
    assume that $\conflictgraph{e}$ is acyclic. 
    Let $\acom_1\ll\ldots\ll\acom_n$ be a topological order 
     on $\communicationsof{e}$,
    $e'=\acom_1\cdots\acom_n$ be the corresponding
    greedy execution, and  $\sigma$ be the permutation
    such that $e'=a_{\sigma(1)}\cdots a_{\sigma(n)}$. The claim is that $e\causalequiv{\system}e'$. Let $j,j'$ be two indices of $e$, and let us show that $j\prec_{e}j'$ iff $\sigma(j)\prec_{e'} \sigma(j')$. 
    First, $\{j,j'\}$ is a matching pair of $e$ if and only
    if $\{\sigma(j),\sigma(j')\}$ is a matching pair of $e'$, which shows the equivalence in that case.
    Assume now that $\{j,j'\}$ is not a matching pair of $e$, and let $\acom,\acom'$ be the communications of $e$
    containing $j$ and $j'$ respectively. Assume also that 
    $j\prec_e j'$,  we show that $\sigma(j)\prec_{e'}\sigma(j')$ (the other implication, similar, is omitted). First,
    $j\prec_e j'$ implies $\acom\conflictedge{e}\acom'$, which
    entails $\sigma(\acom)\conflictedge{e'}\sigma(\acom')$
    because $e$ and $e'$ have the same conflict graph.
    Moreover, $j\prec_e j'$ implies that 
    $a_{j}$ and $a_{j'}$ do not commute, therefore either 
    $\sigma(j)\prec_{e'}\sigma(j')$ or $\sigma(j')\prec_{e'}\sigma(j)$.
    By contradiction let 
     $\sigma(j')\prec_{e'}\sigma(j)$; then 
    $\sigma(\acom')\conflictedge{e'}\sigma(\acom)$, contradicting
    the acyclicity of the conflict graph of $e'$, which ends the proof.
\end{proof}

\begin{definition}[Borderline violations]
    An execution $e\in\executionsof{\system}$ is a 
    \emph{borderline violation} if (1) $e$ is not causally
    equivalent to a greedy execution, (2) $e=e_1\cdot i?\amessage$ for
    some \underline{greedy} execution $e_1$ and receive action $i?\amessage$.
    \end{definition}
    
    \begin{example}[Borderline violation]
    An example of a borderline violation for the system whose
    unique maximal execution is the one of
    Figure~\ref{fig:cgraphacyclic}  is the execution  
    $$e = \send{p}{q}{\amessage_2} \cdot \send{q}{p}{\amessage_1} \cdot \rec{q}{p}{\amessage_1} \cdot \rec{p}{q}{\amessage_2}.$$
    Figure \ref{fig:cgraphacyclic} shows its action and conflict  graph.
    The action graph makes it easy to see that any execution $e'$ equivalent to $e$ will require both the send actions to be done before the first reception, therefore at least one reception will not follow its matching send action.
    \qed
    \end{example}

\begin{lemma}\label{lem:greedy-equals-no-borderline-violation}
    $\system$ is greedy if and only if $\executionsof{\system}$
    contains no borderline violation.
\end{lemma}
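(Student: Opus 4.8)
The plan is to prove the two implications separately; the forward one is essentially definitional, while the backward one rests on a shortest-counterexample argument combined with the graphical characterisation of Lemma~\ref{lem:greedy-graphical-caracterisation}.

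For the implication from left to right I would reason by contraposition: if $e\in\executionsof{\system}$ is a borderline violation, then by clause~(1) of its definition $e$ is not causally equivalent to any greedy execution, and since $e$ is a genuine execution of $\system$ this immediately witnesses that $\system$ is not greedy. Hence a greedy system admits no borderline violation, and this paragraph is a one-liner.

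For the converse I again argue contrapositively: assuming $\system$ is not greedy, I construct a borderline violation. Among all executions of $\system$ that are not causally equivalent to a greedy execution I pick one, $e=a_1\cdots a_n$, of \emph{minimal length}. Every prefix of an execution is again an execution and is strictly shorter, so minimality forces the prefix $e_{<n}=a_1\cdots a_{n-1}$ to be causally equivalent to some \underline{greedy} execution $e_1$; moreover $e_1$ and $e_{<n}$ reach the same configuration, so $a_n$ is enabled after $e_1$ and $e_1\cdot a_n\in\executionsof{\system}$. I then claim that $a_n$ must be a reception: if $a_n=i!\amessage$ were a send it would form a fresh \emph{unmatched} communication $\{n\}$ (no later action can match it), and since $n$ is maximal this communication is a sink of $\conflictgraph{e}$; adding a sink to the acyclic graph $\conflictgraph{e_{<n}}$ (acyclic because $e_{<n}$ is causally equivalent to a greedy execution, by Lemma~\ref{lem:greedy-graphical-caracterisation}) preserves acyclicity, so $e$ itself would be causally equivalent to a greedy execution, contradicting the choice of $e$. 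Thus $a_n=i?\amessage$ is a reception, and $e_1\cdot a_n$ satisfies clause~(2) by construction. It remains to see that $e_1\cdot a_n$ violates clause~(1), which follows once we know $e_1\cdot a_n\causalequiv{\system}e_{<n}\cdot a_n=e$, since being causally equivalent to a greedy execution is invariant within a causal-equivalence class and $e$ fails it. Hence $e_1\cdot a_n$ is the sought borderline violation.

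The step that needs the most care is the congruence $e_1\cdot a_n\causalequiv{\system}e_{<n}\cdot a_n$, that is, appending a common enabled action to two causally equivalent executions yields causally equivalent executions. I would prove this at the level of action graphs, by extending an isomorphism between $\actiongraphof{e_1}$ and $\actiongraphof{e_{<n}}$ so that it sends the new maximal vertex to the new maximal vertex. This is legitimate because the incoming edges of that vertex are determined solely by (a) the commutation relation, which depends only on the labels of the actions, and (b) the matching-pair relation, which is preserved because an action-graph isomorphism maps the $k$-th send on each buffer to the $k$-th send on that buffer (same-buffer, same-type actions being linearly ordered in the action graph). Noting also that in a valid execution every reception is matched, so that only sends can ever be unmatched, this congruence is the one genuinely technical ingredient; the rest of the argument is driven by minimality of the counterexample and Lemma~\ref{lem:greedy-graphical-caracterisation}.
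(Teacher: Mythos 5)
Your proposal is correct and follows essentially the same route as the paper: take a minimal-length execution not causally equivalent to a greedy one, split it as a prefix plus a last action, replace the prefix by an equivalent greedy execution reaching the same configuration, and rule out the case where the trailing action is a send. The only divergences are cosmetic: the paper excludes the send case directly (appending a send to a greedy execution creates no new matching pair, so the result is itself greedy), where you detour through conflict-graph acyclicity and Lemma~\ref{lem:greedy-graphical-caracterisation}, and you make explicit --- and correctly justify via the action-graph isomorphism --- the congruence $e_1\cdot a_n\causalequiv{\system}e$ that the paper uses tacitly both to obtain clause~(1) of the definition and to derive the contradiction.
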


\begin{proof}
Obviously, if $\executionsof{\system}$ 
contains a borderline violation, $\system$ is not greedy.
Conversely, assume that $\system$ is not greedy, and let us
show that $\executionsof{\system}$ contains a borderline violation. Let $e\in\executionsof{\system}$ be an execution
that is not causally equivalent to a greedy execution and
of minimal length among all such executions. Then $e=e_1\cdot a$
with $e_1$ causally equivalent to a greedy execution.
Let $e_1'$ be a greedy execution causally equivalent to $e_1$.
Then $e'=e_1'\cdot a \in\executionsof{\system}$. Moreover,
if $a$ is a send action, then $e'$ is greedy, contradicting the fact that $e$ is not causally equivalent to a greedy execution.
Therefore, $e'$ is a borderline violation.
\end{proof}

Let $\Sigma=I\times \{!,!?\}\times \paylodSet$ denote the
set of communications, and let $\Sigma_{?}=I\times\{?\}\times\paylodSet$
be the set of receive actions. 
Then a greedy execution
can be represented by a word in $\Sigma^*$
and a borderline 
violation is represented by a word in $\Sigma^*\cdot\Sigma_{?}$.
So now, we define two non-deterministic finite state automata
over $\Sigma\cup\Sigma_{?}$: the first one accepts all greedy executions of a system, and the second one all borderline violations. We later explain
how these automata are used to decide greediness.

\begin{lemma}\label{lem:greedy-execution-is-regular}
    Let $\system=(\A_p)_{p\in\procSet}$ of size $n$ and
    $\paylodSet=\bigcup_{p\in\procSet} \paylodSet_p$, 
    $I=\bigcup_{p\in\procSet}I_p$ be fixed.
    There is a non-deterministic finite state automaton $\A_{gr}$
    computable in time $\bigO{|\paylodSet||I|^22^{|I|}n^{|\procSet|+2}}$
    such that $\languageof{\A_{gr}}=\{e\cdot i ?\amessage\in\Sigma^*\cdot\Sigma_{?}\mid e\cdot i?\amessage\in\executionsof{\system}\mbox{ and $e$ is greedy}\}$.
\end{lemma}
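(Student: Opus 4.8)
The plan is to build $\A_{gr}$ as a reachability automaton for $\productof{\system}$ that reads the compressed representation of a greedy execution directly, exploiting the very constrained shape that buffers take along a greedy run. The first thing I would establish is a \emph{buffer-shape} observation: along any greedy execution $e$, each buffer $i$ is either empty or contains only \emph{orphan} messages (messages whose matching receive does not occur in $e$). Indeed, a letter $i!?\amessage$ of the representation stands for a matching pair $\{j,j+1\}$, and such a pair can occur only when the numbers of previous sends and receives on $i$ coincide, i.e. when $b_i$ is empty just before the send; conversely an unmatched send $i!\amessage$ only appends to the tail, and no receive of a greedy $e$ ever removes an orphan (every receive of $e$ sits immediately after its matching send, and unmatched communications are sends). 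Hence, reading $e$ from left to right, every buffer starts empty, stays empty as long as only rendez-vous happen on it, and from its first unmatched send onward stays non-empty with a \emph{fixed head} equal to that first orphan.

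This observation reduces the unbounded buffer contents to finitely much information. I would take the states of $\A_{gr}$ to be triples $(\vect{l}, E, x)$ where $\vect{l}\in\prod_{p}L_p$ is a control state of $\productof{\system}$, $E\subseteq I$ records which buffers are currently non-empty, and $x\in\{\bot\}\cup(I\times\paylodSet)$ optionally records a committed target buffer together with its head message. Transitions mirror the representation: on a letter $i!?\amessage$ (with $p=\processof{i!\amessage}$, $q=\processof{i?\amessage}$) I require $i\notin E$ and a send-transition $l_p\to l_p'$ followed by a receive-transition $l_q\to l_q'$ in the respective automata (a length-two path in $\A_p$ when $p=q$), leaving $E$ and $x$ unchanged; on a letter $i!\amessage$ I require a send-transition, add $i$ to $E$, and — only when $i$ was empty and $x=\bot$ — optionally set $x:=(i,\amessage)$, non-deterministically guessing that this orphan is the one the trailing receive will consume; and on the single final letter $i?\amessage\in\Sigma_{?}$ I require $x=(i,\amessage)$ together with a receive-transition for $q=\processof{i?\amessage}$, then move to a unique accepting sink with no outgoing edge (which forces the accepted language into $\Sigma^*\cdot\Sigma_{?}$). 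The initial state is $(\vect{l}_0,\emptyset,\bot)$.

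Correctness would be proved by relating runs of $\A_{gr}$ to executions of $\system$ through the decompression that expands each $i!?\amessage$ into $i!\amessage\cdot i?\amessage$. The invariant carried along a run is that $\vect{l}$ is the control state reached by the decompressed greedy execution, $E$ is exactly its set of non-empty buffers, and, when $x=(i,m)$, buffer $i$ is non-empty with head $m$; the buffer-shape observation makes this invariant inductive. For soundness, an accepting run yields a greedy $e$ with $e\cdot i?\amessage\in\executionsof{\system}$, since the final transition checks both that the receive is enabled ($x=(i,\amessage)$ certifies the head) and that $q$ can fire it. For completeness, given greedy $e$ with $e\cdot i?\amessage\in\executionsof{\system}$, I compress $e$ and exhibit the run that commits $x$ at the first unmatched send on $i$; because that first orphan is precisely the message dequeued by the trailing receive, the guessed head matches and the run is accepting.

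Finally, for the complexity bound I would count the constructed automaton. There are $\bigO{n^{|\procSet|}2^{|I|}|I||\paylodSet|}$ states. For the transitions the only delicate point is the rendez-vous letters: rather than charging $n^2$ to each of the $|I||\paylodSet|$ possible letters, one observes that, summed over all letters $i!?\amessage$, the number of admissible (send-transition, receive-transition) pairs is bounded by (total sends)$\times$(total receives)$=\bigO{n^2}$, the message being determined by the chosen pair; together with the $\bigO{|I||\paylodSet|}$ ways of reading a plain send or the trailing receive, this yields the stated $\bigO{|\paylodSet||I|^22^{|I|}n^{|\procSet|+2}}$ bound, all computable by enumerating the $\delta_p$. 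I expect the genuine difficulty to be neither the state space nor the bookkeeping but the buffer-shape observation itself: one must argue carefully that a letter $i!?\amessage$ represents a true matching pair exactly when $i$ is empty, and that no greedy prefix can shrink a non-empty buffer, so that one bit of emptiness per buffer plus a single head suffices — this is what makes the otherwise non-regular reachability set tractable here.
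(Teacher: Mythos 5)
Your construction is essentially the paper's own proof: the paper's $\A_{gr}$ likewise has states in $L_{\system}\times(\{\epsilon\}\cup I\times\paylodSet)\times 2^{I}$ (current product control state, an optional committed pair $(i,\amessage)$, and the emptiness pattern of the buffers), commits the pair non-deterministically at an unmatched send into an empty buffer, and moves to an accepting sink on the trailing receive $i?\amessage$ exactly when it matches the committed pair, with the same $\bigO{|\paylodSet||I|^2 2^{|I|}n^{|\procSet|+2}}$ accounting. If anything, your version is more explicit than the paper's: your buffer-shape observation and the explicit side condition $i\notin E$ on rendez-vous letters $i!?\amessage$ spell out what the paper leaves implicit in its composite step relation (a matched pair can fire only on an empty buffer, and the head of a non-empty buffer along a greedy run is frozen at its first orphan), which is precisely what makes the one-bit-per-buffer abstraction sound.
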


\begin{proof}
    Let $\productof{\system}=(L_{\system},\paylodSet,I,\actionSet,\delta_{\system},\vect{l}_0)$
    and let $\A_{gr}=(L_{gr},\delta_{gr},l_{gr,0},F_{gr})$ be 
    the non-deterministic finite state automaton over $\Sigma$
    with
    $L_{gr}=L_{\system}\times (\{\epsilon\}\cup I \times\amessage)\times 2^I\cup\{l_{F}\}$, 
    $l_{gr,0}=(\vect{l}_0,\epsilon,\emptyset)$,
    $F_{gr}=\{l_F\}$, and the transitions defined as follows.
    First, while reading a letter $c\in\Sigma$, 
    $\big((\vect{l},x,S),c,(\vect{l}',x',S')\big)\in\delta_{gr}$ 
    if
    \begin{itemize}
        \item 
        $\Transition{(\vect{l},\vect{b})}{c}{(\vect{l}',\vect{b'})}{\system}$ 
        for some $\vect{b},\vect{b'}$ 
        such that for all $i\in I$, $b_i\neq\emptyset$ iff $i\in S$, and $b_i'\neq\emptyset$ 
        iff $i\in S'$, and
        \item one of the two following holds
        \begin{itemize}
            \item 
            either $x=x'$,  
            \item or $x=\epsilon$ and $x'=(i,\amessage)$ and $c=i!\amessage$ and $i\not\in S$.
        \end{itemize}
    \end{itemize} 
    Second, while reading the letter $i?\amessage\in\Sigma_{?}$,
    $\big((\vect{l},x,S),i?\amessage,l_F\big)\in\delta_{gr}$ if 
    $x=(i,\amessage)$ and $(\vect{l},i?\amessage,\vect{l}')\in\delta_{\system}$ for some
    $\vect{l}'\in L_{\system}$.
    Then 
    $\languageof{\A_{gr}}$ is as announced.
    Moreover, each transition of 
    $\A_{gr}$ can be constructed in constant time,
    so $\A_{gr}$ can be constructed in time as announced.
\end{proof}

\begin{lemma}\label{lem:borderline-violation-is-regular}
    There is a non-deterministic finite state automaton $\A_{bv}$
    computable in time $\bigO{|I|^3|\paylodSet|^3}$    
    such that 
    $\languageof{\A_{bv}}=\{e\in\Sigma^*\cdot\Sigma_?\mid \conflictgraph{e}
    \mbox{ contains a cycle}\}$.
\end{lemma}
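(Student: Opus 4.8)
The plan is to build $\A_{bv}$ so that it guesses a cycle of $\conflictgraph{e}$ and verifies it on the fly while reading $e=c_1\cdots c_m\cdot i_0?\amessage_0$ from left to right. I read a letter of $\Sigma$ as a communication of the underlying greedy execution $e_1=c_1\cdots c_m$: a letter $i!?\amessage$ is a matched pair, a letter $i!\amessage$ an unmatched send, and the trailing letter $i_0?\amessage_0\in\Sigma_{?}$ is the receive that, by the FIFO discipline, is matched with the first unmatched send on $i_0$ (call its communication $\acom^*$), turning it into a matched pair. The first observation is that $e_1$ is greedy, so by the argument in the proof of Lemma~\ref{lem:greedy-graphical-caracterisation} its conflict graph is acyclic; moreover, since the two positions of a matched pair of a greedy execution are consecutive, the communications of $e_1$ occupy pairwise disjoint blocks of positions and every edge of $\conflictgraph{e_1}$ runs from an earlier block to a later one. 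Appending the final receive creates no new outgoing edge (it sits at the very last position) and only adds edges \emph{entering} $\acom^*$, namely $\acom'\conflictedge{e}\acom^*$ for those prefix communications $\acom'$ that contain an action not commuting with $i_0?\amessage_0$, i.e. an action of process $\processof{i_0?\amessage_0}$, or a receive on $i_0$. Consequently every cycle of $\conflictgraph{e}$ must pass through $\acom^*$ and use exactly one such new edge, so it has the shape $\acom^*\conflictedge{e}\cdots\conflictedge{e}\acom'\conflictedge{e}\acom^*$, where $\acom^*\conflictedge{e}\cdots\conflictedge{e}\acom'$ is a path of $\conflictgraph{e_1}$ (hence strictly increasing in position) closed by one new edge.

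This reduces cycle detection to a reachability question: starting from $\acom^*$, is there a forward path in $\conflictgraph{e_1}$ to some communication $\acom'$ that conflicts with the trailing receive? Here two communications \emph{conflict} (and the earlier one has an edge to the later one) exactly when they contain non-commuting actions, that is, when they share a process, or contain two actions of the same type on a common buffer; whether two letters conflict depends only on their buffers, types and the processes owning the corresponding actions, so it is a fixed predicate determined by the signature and can be tabulated once. For the first step $\acom^*\conflictedge{e}\acom_1$ of the path one uses only the resources of the \emph{send} of $\acom^*$, since its receive produces no outgoing edge, whereas for every later communication on the path all of its actions are available.

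I would then realise this reachability test with a single \textquote{token} rather than by tracking the whole reachable set. Concretely, $\A_{bv}$ first guesses the trailing receive $(i_0,\amessage_0)$ and then runs in two phases. In the first phase it reads letters of $\Sigma$ and must adopt the first unmatched send on $i_0$ as $\acom^*$, checking that its message is $\amessage_0$ and entering a dead state otherwise; this guarantees that the chosen $\acom^*$ is really the communication matched by the trailing receive. In the second phase it keeps in its state the current communication $\acom_t$ of the guessed path (its letter, together with a constant flag marking the special case $\acom_t=\acom^*$, suffices to recover the resources used for the next edge); on each subsequent letter it either skips it, leaving $\acom_t$ unchanged, or, if that letter conflicts with $\acom_t$, advances the path by taking it as the new current communication. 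It accepts on reading the trailing letter, provided this letter matches the guess $(i_0,\amessage_0)$ and the current communication conflicts with it, thus closing the back edge $\acom'\conflictedge{e}\acom^*$. Soundness is immediate, since an accepting run exhibits a path of $\conflictgraph{e_1}$ closed into a cycle through $\acom^*$; completeness follows from the shape of cycles established above, the token being able to follow any single forward path.

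For the size, a state records the guessed trailing receive (one of $\bigO{|I||\paylodSet|}$ choices), the current communication of the path (one of $\bigO{|I||\paylodSet|}$ letters, plus the constant flag) and a constant phase, giving $\bigO{|I|^2|\paylodSet|^2}$ states; multiplying by the $\bigO{|I||\paylodSet|}$ letters of $\Sigma\cup\Sigma_{?}$ yields $\bigO{|I|^3|\paylodSet|^3}$ transitions, each decidable in constant time from the conflict table. The main obstacle is precisely this polynomial bound: the naive construction would maintain the whole set of communications reachable from $\acom^*$, which is exponential in the number of buffers and processes. The two facts that make the single-token walk correct are that the acyclicity of the greedy prefix forces every cycle to be a single forward path closed by one back edge into $\acom^*$, and that conflict is a per-letter predicate, so following one witnessing path — rather than maintaining the reachable frontier — is enough.
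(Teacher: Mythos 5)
Your proposal is correct and follows essentially the same route as the paper's proof: the paper's $\A_{bv}$ likewise guesses the trailing receive together with its matching unmatched send (its states are $\{l_{bv,0},l_{bv,1}\}\cup\Sigma_{?}\times\Sigma$, i.e.\ exactly your guessed receive plus a single \textquote{token} letter recording the current communication of the path), advances the token nondeterministically along conflict edges of the acyclic greedy prefix, and accepts when the final receive closes the back edge, within the same $\bigO{|I|^3|\paylodSet|^3}$ bound. The only differences are points of care on your side rather than of method: you additionally force the pivot to be the \emph{first} unmatched send on the guessed buffer (the paper lets the automaton pivot on any unmatched send with the right buffer and message, which is harmless on genuine executions since the first unmatched send conflicts with any later send on the same buffer), and you use the full non-commutation predicate (shared process \emph{or} same-type action on a shared buffer) where the paper's transitions (4)--(5) test only $\processof{c}\cap\processof{c'}\neq\emptyset$ --- your version is the more faithful reading of the conflict-graph definition in the mailbox setting.
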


\begin{proof}
    Let $\A_{bv}=(L_{bv},\delta_{bv},l_{bv,0},\{l_{bv,1}\})$ be the 
    non-deterministic finite state automaton over 
    $\Sigma\cup \Sigma_{?}$ such that $L_{bv}=\{l_{bv,0},l_{bv,1}\}\cup
    \Sigma_{?}\times\Sigma$, and for all $c,c'\in\Sigma$,
    for all $a\in\Sigma_{?}$, for all $i\in I$, $\amessage\in\paylodSet$,
    (1) $(l_{bv,0},c,l_{bv,0})\in\delta_{bv}$
    (2) $(l_{bv,0},i!\amessage,(i?\amessage,i!\amessage))\in\delta_{bv}$,
    (3) $((a,c),c',(a,c))\in\delta_{bv}$ 
    (4) $((a,c),c',(a,c'))\in\delta_{bv}$ if 
    $\processof{c}\cap\processof{c'}\neq\emptyset$,
    and (5) $((i?\amessage,c),i?\amessage,l_{bv,1})\in\delta_{bv}$ 
    if $\processof{c}\cap\processof{i?\amessage}\neq\emptyset$.
    Then $\languageof{\A_{bv})}=\{e\in \Sigma^*\Sigma_{?}\mid \conflictgraph{e}\mbox{ contains a cycle}\}$. Moreover, each
    transition of $\A_{bv}$ can be constructed in constant time,
    so $\A_{bv}$ can be constructed in time as announced.
\end{proof}

From the computability of the
two previous automata, we deduce the decidability of system greediness.

\begin{theorem}\label{thm:greedy-is-decidable-in-ptime}
    Whether a system $\system=(\A_p)_{p\in\procSet}$ of size $n$ is greedy is decidable
    in time $\bigO{|I|^5|\paylodSet|^42^{|I|}n^{|\procSet|+2}}$.
\end{theorem}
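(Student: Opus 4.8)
The plan is to assemble the three preceding lemmas into a single language-emptiness check. By Lemma~\ref{lem:greedy-equals-no-borderline-violation}, $\system$ is greedy if and only if $\executionsof{\system}$ contains no borderline violation, so it suffices to decide whether the set of borderline violations is empty. The key claim I would establish is that this set is exactly the intersection language $\languageof{\A_{gr}}\cap\languageof{\A_{bv}}$, where $\A_{gr}$ and $\A_{bv}$ are the automata of Lemmas~\ref{lem:greedy-execution-is-regular} and~\ref{lem:borderline-violation-is-regular}. Granting this, $\system$ is greedy if and only if $\languageof{\A_{gr}}\cap\languageof{\A_{bv}}=\emptyset$, which I would decide by the standard product construction followed by a reachability test.

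To prove the claim I would unfold the definition of a borderline violation and match each of its conditions to one of the two automata. A word $w\in\Sigma^*\cdot\Sigma_?$ is a borderline violation exactly when (i) $w\in\executionsof{\system}$, (ii) $w=e_1\cdot i?\amessage$ with $e_1$ greedy, and (iii) $w$ is not causally equivalent to any greedy execution. Conditions (i) and (ii) are precisely the membership condition of $\languageof{\A_{gr}}$. For condition (iii), Lemma~\ref{lem:greedy-graphical-caracterisation} gives that $w$ fails to be causally equivalent to a greedy execution iff $\conflictgraph{w}$ contains a cycle, which is exactly the membership condition of $\languageof{\A_{bv}}$. Since both automata range over $\Sigma\cup\Sigma_?$, taking the product and conjoining their accepting conditions yields precisely the borderline violations; I would check both inclusions explicitly, each being a direct rewriting of the definitions together with Lemma~\ref{lem:greedy-graphical-caracterisation}.

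For the algorithm and its complexity, I would form the product automaton $\A_{gr}\times\A_{bv}$, whose states are pairs of states and whose transitions synchronise on a common input letter, and then test reachability of a product-accepting state by a linear-time graph search. The running time is dominated by the product: its number of transitions is bounded by the product of the transition counts of the two factors, namely $\bigO{|\paylodSet||I|^22^{|I|}n^{|\procSet|+2}}$ from Lemma~\ref{lem:greedy-execution-is-regular} times $\bigO{|I|^3|\paylodSet|^3}$ from Lemma~\ref{lem:borderline-violation-is-regular}, which multiplies out to $\bigO{|I|^5|\paylodSet|^42^{|I|}n^{|\procSet|+2}}$ as claimed; the two construction costs and the linear emptiness test are absorbed into this bound. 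The main obstacle is not the routine product/emptiness machinery but pinning down the exact equality in the claim: one must verify that $\A_{gr}$ certifies both that the prefix is genuinely greedy and that $w$ is a real execution of $\system$, while $\A_{bv}$ certifies non-equivalence to a greedy execution purely combinatorially via the conflict graph, so that their conjunction captures the borderline violations with neither spurious nor missing words.
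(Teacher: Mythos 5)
Your proposal is correct and follows essentially the same route as the paper: reduce greediness to the absence of borderline violations via Lemma~\ref{lem:greedy-equals-no-borderline-violation}, identify that set with the intersection of $\languageof{\A_{gr}}$ and $\languageof{\A_{bv}}$ using Lemma~\ref{lem:greedy-graphical-caracterisation}, and test emptiness of the product automaton, with the stated bound coming from multiplying the sizes from Lemmas~\ref{lem:greedy-execution-is-regular} and~\ref{lem:borderline-violation-is-regular}. If anything, your formulation $\languageof{\A_{gr}}\cap\languageof{\A_{bv}}$ is the one consistent with the statement of Lemma~\ref{lem:greedy-execution-is-regular}, whose language already ends with a letter of $\Sigma_{?}$, whereas the paper writes $\languageof{\A_{gr}}\cdot\Sigma_{?}\cap\languageof{\A_{bv}}$, an apparent slip.
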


\begin{proof} 
    Let $\A_{gr}$ and $\A_{bv}$ be the two automata defined
    in Lemmas~\ref{lem:greedy-execution-is-regular} and 
    \ref{lem:borderline-violation-is-regular}.
    By Lemma~\ref{lem:greedy-graphical-caracterisation} and
    by definition of a borderline violation, the set of
    borderline violations of $\system$
    is
    $\languageof{\A_{gr}}\cdot\Sigma_{?}\cap \languageof{\A_{bv}}$.
    So, by Lemma~\ref{lem:greedy-equals-no-borderline-violation},
    $\system$ is greedy if and only if 
    $\languageof{\A_{gr}}\cdot\Sigma_{?}\cap \languageof{\A_{bv}}=\emptyset$.
    The claim then directly follows from the fact that
    emptiness testing for a non-deterministic finite state 
    automaton of size $n$ is in time $\bigO{n}$.
\end{proof}

\section{Model-Checking Greedy Systems}\label{sec:model-checking}

In this section we explore how to verify
various safety properties of greedy systems in polynomial
time. Since the reachability set of a greedy
system is not regular, it is not obvious that regular
safety properties are always decidable. We show
that this problem actually is decidable, with
a polynomial time complexity under mild assumptions.
Then, we list a few regular safety
properties that were also considered in other works,
in particular for approaches based on session 
types~\cite{DBLP:journals/jacm/HondaYC16,DBLP:conf/esop/DenielouY12,DBLP:journals/pacmpl/ScalasY19}.

\subsection{Checking Regular Safety Properties}
Let 
$\system=(\A_p)_{p\in\procSet}$ with 
$I=\bigcup_{p\in\procSet}I_p=\{1,\ldots,|I|\}$ 
be a system. 
We identify a word $w=\vect{l} \cdot \sharp \cdot b_1\cdot \sharp\cdots\sharp \cdot b_{|I|}\in L_{\system}\cdot (\sharp\cdot\paylodSet^*)^{|I|}$ with the
configuration $(\vect{l},b_1,\ldots,b_{|I|})$. We say that a set of 
configurations $P(\system)$ is \emph{regular}\footnote{also
called \emph{channel recognizable} by Cece and Finkel~\cite[Definition~10]{DBLP:journals/iandc/CeceF05}} if the 
corresponding set of words coding these configurations is regular.
A \emph{property} is a function $P$ that associates to every
system $\system$ a set of configurations $P(\system)$.
We say that $P$ is regular if $P(\system)$ is regular for 
all $\system$, and \emph{computable} in time 
$\bigO{f(n)}$ if a non-deterministic 
finite state automaton $\A$ accepting 
$P(\system)$ can be computed in time $\bigO{f(|\system|)}$.
The $P$ \emph{safety problem} is whether a system
$\system$ is such that 
$\reachablesetof{\system}\cap P(\system)= \emptyset$.
Examples of safety problems are discussed below in 
Section~\ref{sec:examples-of-regular-safety-properties}.

Cece and Finkel showed that, 
for a binary half-duplex system $\system$, 
$\reachablesetof{\system}$ is regular and computable in 
polynomial time~\cite[Theorem~26]{DBLP:journals/iandc/CeceF05}. 
Since the emptiness of the intersection
of two polynomial time computable regular languages is 
decidable in polynomial time, for any regular polynomial
time property $P$, the $P$ safety problem is
decidable in polynomial time for binary
half-duplex systems. 

For greedy systems, however,
the situation is a bit different. Indeed, observe that 
$\reachablesetof{\system}$ may be non-regular and even 
context sensitive (it is easy to define a system
with one machine and 3 buffers, such that for some control state 
$l$, $\reachablesetof{\system}\cap l(\sharp\paylodSet^*)^3=
\{l\sharp a^n\sharp b^n\sharp c^n\mid n\geq 0\}$).
So it is not obvious how to decide the emptiness
of $\reachablesetof{\system}\cap P(\system)$.

Still, the $P$ safety problem remains decidable
for a computable regular property $P$. 

\begin{theorem}\label{thm:regular-model-checking-is-in-PTIME}
    Let $P$ be a computable regular 
    property. Then, for a given
    greedy system $\system=(\A_p)_{p\in\procSet}$, 
    the $P$ safety problem is
    decidable. Moreover,
    if $P$ is
    computable in time 
    $\bigO{|\system|^k}$ for some $k\geq 0$, then
    the problem is decidable in time 
    $\bigO{|\system|^{k+|\procSet|+2}}$.
\end{theorem}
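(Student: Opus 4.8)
The plan is to reduce the $P$ safety problem for a greedy system to an emptiness test on a single nondeterministic automaton, working with \emph{executions} rather than with configurations, so as to sidestep the non-regularity of $\reachablesetof{\system}$. The starting observation is that, since $\system$ is greedy, every reachable configuration is already reached by a greedy execution: any $e\in\executionsof{\system}$ is causally equivalent to a greedy $e'$ (Definition~\ref{def:onesynch}), and causally equivalent executions reach the same configuration (noted in Section~\ref{sec:preliminaries}). Hence $\reachablesetof{\system}=\{\aconf \mid \Transition{\initconf}{e}{\aconf}{\system}\text{ for some greedy }e\}$, and $\reachablesetof{\system}\cap P(\system)\neq\emptyset$ if and only if there is a greedy execution whose final configuration lies in $P(\system)$.

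Next I would pin down the shape of the configuration reached by a greedy execution. Because in a greedy execution every matching pair $i!\amessage\cdot i?\amessage$ is adjacent, FIFO forces buffer $i$ to be empty right before such a send, so a matched pair leaves buffer $i$ empty and only the \emph{unmatched} sends contribute to the final buffers. Consequently the set of non-empty buffers grows monotonically, and the final content of buffer $i$ is exactly the subword of unmatched sends on $i$, read in execution order. I would record this as a small structural lemma: the configuration $\aconf=(\vect{l},\vect{b})$ reached by a greedy execution $e$ is determined by the final control state $\vect{l}$ together with, for each $i$, the sequence $b_i$ of unmatched $i$-sends of $e$.

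The core construction then combines the greedy-execution automaton of Lemma~\ref{lem:greedy-execution-is-regular} with an on-the-fly evaluation of $\A_P$ along the produced configuration. Concretely, I would build an automaton over $\Sigma$ whose state extends the triple $(\vect{l},x,S)$ (product control state, pending matched send, non-empty-buffer set, as for $\A_{gr}$) by a guess of the $|I|+1$ boundary states $q_0,\dots,q_{|I|}$ that $\A_P$ visits at the separators $\sharp$ of the encoding $\vect{l}\cdot\sharp\cdot b_1\cdots\sharp\cdot b_{|I|}$, where $q_0$ is reached by $\A_P$ after reading the guessed final control state and $q_{|I|}$ is accepting. For each $i$ the automaton maintains the state of the run of $\A_P$ on $b_i$ started at $q_{i-1}$, advancing it on every unmatched $i$-send and requiring it to reach $q_i$ at the end, while checking that the simulated final control state matches the guessed $\vect{l}$. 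By the structural lemma this automaton accepts exactly (codes of) those greedy executions whose configuration is in $P(\system)$, so $\system$ violates $P$ safety iff its language is nonempty, which is testable in linear time in the size of the automaton, exactly as in Theorem~\ref{thm:greedy-is-decidable-in-ptime}.

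The hard part will be the complexity accounting, and it is here that the genuine obstacle shows up. Unmatched sends to different buffers are emitted \emph{interleaved} by a greedy execution, whereas $\A_P$ consumes the buffers grouped one after another; moreover the example $\reachablesetof{\system}\cap l(\sharp\paylodSet^*)^3=\{l\sharp a^n\sharp b^n\sharp c^n\}$ shows that the buffer contents are genuinely correlated, so they cannot be validated buffer-by-buffer by independent emptiness tests. This forces the evaluation of $\A_P$ to follow a single execution while tracking its progress on every buffer at once, and the delicate point is to organise this bookkeeping---using the boundary-state guesses together with the monotonicity of the non-empty-buffer set---so that the resulting automaton stays within the announced $\bigO{|\system|^{k+|\procSet|+2}}$ bound rather than incurring a larger blow-up from tracking all buffers simultaneously. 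Once the construction is arranged to respect these correlations at the stated cost, the theorem follows from the linear-time emptiness test.
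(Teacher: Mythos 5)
Your proposal is correct and takes essentially the same route as the paper's proof: the paper likewise restricts attention to greedy executions (intersecting with the automaton $\A_{gr}$ of Lemma~\ref{lem:greedy-execution-is-regular}), uses the fact that matched sends leave the buffers untouched while each unmatched $i$-send advances a per-buffer ``pebble'' run of the property automaton, guesses the final control state and the boundary states glued together by $\sharp$-transitions, and concludes with a linear-time emptiness test on the product. The ``delicate bookkeeping'' you defer in your last paragraph is already resolved by the very state space you describe---it is exactly the paper's automaton $\A_P$ with states $L_{\system}\times L_{\system}\times L_{\A}^{|I|}\times L_{\A}^{|I|}$---and the announced bound then follows under the paper's stated convention that the constants in the $\bigO{\cdot}$ estimates may depend on $|\procSet|$ and $|I|$.
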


\begin{proof}
    Let $\system$ be fixed, with 
    $\productof{\system}=(L_{\system},\paylodSet,I,\actionSet,\delta_{\system},\vect{l}_{\system,0})$.
    Let $\A=(L_{\A},\delta_{\A},l_{\A,0},F_{\A})$ be the
    polynomial time computable non-deterministic
    finite state automaton over alphabet 
    $L_{\system}\cup\{\sharp\}\cup\paylodSet$ such that 
    $\languageof{\A}=P(\system)$.
    We define an automaton $\A_P=(L_P,\delta_P,L_{P,0},F_P)$ 
    over the alphabet
    $\Sigma$ of communications such that for
    all greedy execution $e\in\executionsof{\system}$,
    $e\in \languageof{\A_P}$ iff there is 
    $\aconf\in P(\system)$ such that 
    $\Transition{\initconf}{e}{\aconf}{\system}$.

    Before defining $\A_p$ formally, let us give
    some intuitions about how it works on an example.
    Assume that $\A_p$ reads $e=1!a\cdot 2!?b\cdot 2!c\cdot 1!d$,
    and that the final configuration $\aconf$ is 
    $\vect{l}\sharp ad\sharp c$. While reading $e$,
    $\A_p$ should check the existence of an accepting run
    of $\A$ on $\aconf$.
    When $\A_P$ reads a 
    communication, there are two cases. Either the communication
    is a matched send (like $2!?b$) 
    and therefore it does not contribute
    to the final configuration, so $\A_p$ merely ignores it.
    Or the communication is an unmatched send, and it contributes
    to a piece of the accepting run of $\aconf$ on $\A$.
    However, these pieces of the accepting run of $\A$ are not 
    necessarily consecutive. For instance, in the above 
    execution, $a$ and $d$ are consecutive in the run of $\A$,
    but $\A_p$ reads $c$ in between, although $c$ contributes 
    only later to the run of $\A$.
    To correctly check the existence of a run of $\A$ on 
    $\aconf$, $\A_p$ uses for each buffer a distinguished "pebble" placed
    on a state of $\A$. Every time $\A_p$ reads 
    an unmatched send $i!\amessage$, it moves the $i$-th pebble 
    along an $\amessage$ transition of $\A$. So each pebble checks for a
    piece of the accepting run of the whole word coding
    the final configuration. $\A_p$ therefore also needs to make
    sure that all of these pieces of runs can be concatenated to
    form a run of $\A$. Therefore, at the beginning, $\A_p$
    guesses an initial control state $l_i$ and
    a final control state $l_i'$ for each pebble, and ensures
    that $(l_i',\sharp,l_{i+1})\in\delta_{\A}$.
    While doing so, going back to our example,
    $\A_p$ ensures that $ad\#c$ can be read by $\A$.
    It remains also to deal with the control state: indeed,
    $\A$ should accept $\vect{l}\sharp ad\sharp c$.
    So $\A_P$ guesses before reading $e$
    that the
    control state will be $\vect{l}$ after executing $e$, 
    and while reading
    $e$, it computes the current control state of $\system$.
    In the end, it checks that this control state actually is
    $\vect{l}$.
    
    Now that we presented the intuitions about $\A_p$, let us
    define it formally.
    Let us start with the set of control states. Let
    $L_P=L_{\system}\times L_{\system}\times L_{\A}^{|I|}\times L_{\A}^{|I|}$.
    Intuitively, the control state $(\vect{l}_{\system},\vect{l}_{F},\vect{l}_{\A},\vect{l}_{I})$ of $\A_p$ corresponds to
    a situation where: 
    (1)~the current control state of $\system$ is 
    $\vect{l}_{\system}$, 
    (2)~the guessed final control state of
    $\system$ is $\vect{l}_{F}$, 
    (3)~the $i$-th pebble currently
    is on state $l_{\A,i}$ of $\A$, and 
    (4)~$\vect{l}_{I}$ is a copy of the initial positions
    of the pebbles and will be checked against their final
    positions in the end to ensure that all pieces of runs
    can be concatenated.

    Let us now define the set $L_{P,0}$ of initial control 
    states of $\A_P$. 
    Let us set that
    $(\vect{l}_{\system},\vect{l}_{F},\vect{l}_{\A},\vect{l}_{I}\in L_{P,0}$ if 
    (1)~$\vect{l}_{\A}=\vect{l}_{I}$,
    (2)~$l_{\A,1}\in\delta_{\A}^*(l_{\A,0},\vect{l}_{F}\cdot\sharp)$, and 
    (3)~$\vect{l}_{\system}=\vect{l}_{\system,0}$. 
    Intuitively, a control state is initial if 
    (1)~$\vect{l}_{I}$ is a copy of $\vect{l}_{\A}$, 
    (2)~the position of the pebble of buffer $1$ is on a state
    that is reachable in $\A$ after reading $\vect{l}_{F}\sharp$ and 
    (3)~$\vect{l}_{\system}$ is the initial control state of $\system$.

    Similarly, let us now define the set $F_{P}$ of final 
    control states of $\A_P$. 
    Let us set $(\vect{l}_{\system},\vect{l}_{F},\vect{l}_{\A},\vect{l}_{I})\in F_{P}$ if 
    (1) $\vect{l}_{\system}=\vect{l}_{F}$, 
    (2) for all $i=1,\ldots,|I|-1$, $(l_{\A,i},\sharp,l_{\A,i+1})\in\delta_{\A}$, and 
    (3) $l_{\A,|I|}\in F_{\A}$.
    Intuitively, a control state is final if 
    (1)~the current control state of $\system$ corresponds to 
    the guessed final one,
    (2)~the $i$-th pebble can be moved along a $\sharp$ 
    transition so as to reach the initial position of pebble $i+1$, and 
    (3) the pebble of the last buffer reached an accepting state
    of $\A$. 

    Let us now define the set $\delta_P$ 
    of transitions of $\A_p$. Let us set that
    $$
        \Big((\vect{l}_{\system},\vect{l}_{F},\vect{l}_{\A},\vect{l}_{I}),~\acom,~(\vect{l}_{\system}',\vect{l}_{F}',\vect{l}_{\A}',\vect{l}_{I}')\Big)~\in~\delta_P
    $$
    if
    (1)~$\vect{l}_{F}=\vect{l}_{F}'$,
    (2)~$\vect{l}_I=\vect{l}_{I}'$,
    (3)~$\vect{l}_{\system}'\in\delta_{\system}^*(\vect{l}_{\system},\acom)$,
    (4.1)~if $\acom=i!?\amessage$, then $\vect{l}_{\A}=\vect{l}_{\A}'$, and
    (4.2)~if $\acom=i!\amessage$, then $(l_{\A,i},\amessage,l_{\A,i}')\in\delta_{\A}$ and $l_{\A,j}=l_{\A,j}'$
    for all $j\neq i$.
    Intuitively, when it reads a matched send $i!?\amessage$,
    $\A_p$ only updates $\vect{l}_{\system}$ according to
    the sequence of actions $i!?\amessage$, while when
    it reads an unmatched send $i!\amessage$ it also updates the
    position of the $i$-th token.
    
    Now that $\A_P$ is defined, observe that
    $\reachablesetof{\system}\cap P(\system)=\emptyset$
    iff $\languageof{\A_p}\cap \languageof{\A_{gr}}=\emptyset$, where $\A_{gr}$ is the automaton defined
    in Lemma~\ref{lem:greedy-execution-is-regular}. The
    emptiness of this intersection is decidable in time 
    $\bigO{|\A_P|\cdot|\A_{gr}|}$, which shows the claim.
\end{proof}

\subsection{Examples of Regular Safety Problems}\label{sec:examples-of-regular-safety-properties}

In this section we review a few properties of systems
that are polynomial-time computable regular properties
and showcase some applications of Theorem~\ref{thm:regular-model-checking-is-in-PTIME}.

\paragraph{Reachability.} The control state
reachability problem is to decide, given a system $\system$
and a control state $\vect{l}\in L_{\system}$,
whether there exists $\vect{b}\in (\paylodSet^*)^I$ and $e\in\actionSet^*$ such that
$\Transition{\initconf}{e}{(\vect{l},\vect{b})}{\system}$.
The configuration reachability problem, on the other hand,
is to decide, given a system $\system$ and a configuration
$\aconf$, whether $\aconf\in\reachablesetof{\system}$.
Both problems are safety problems for a regular property $P$
computable in polynomial time (and even constant time):
$P(\system)=\vect{l}\cdot(\sharp\cdot\paylodSet^*)^{|I|}$
for the control state reachability problem, and $P(\system)=\{\aconf\}$ for the configuration reachability problem.

\paragraph{Unspecified reception.}
Unspecified receptions is one of the errors
that session types usually forbid. This
error makes more sense for mailbox systems, 
so let us assume for now that $\system$ is a mailbox system.
A configuration is an \emph{unspecified reception} if 
one of the participants is in a receiving state, and none of its outgoing transitions can receive the first message in its buffer. 

Let us define these notions more formally. A control state
$l_p$ of process $p$ is a receiving state if for
all $a,l'$ such that $(l_p,a,l')\in\delta_p$, $a$ is
a receive action.
The set $\{\amessage\mid (l_p,p?\amessage,l')\in\delta_p \mbox{ for some }l'\}$ is called the ready set of $l_p$.

A configuration $(\globalstate{l}, \B)$ is said an \emph{ unspecified reception configuration} if there is $p \in\procSet$ such that (1) $l_p$ is a receiving state, (2) 
$b_p=\amessage b'$ for some $\amessage\in\paylodSet$ and $b'\in\paylodSet^*$, and (3)
$\amessage$ is not in the ready set of $l_p$.
It can be observed that the set $UR(\system)$ of 
unspecified receptions of $\system$ defines a regular
property that is computable in polynomial time.

\paragraph{Progress.}
Another property that is central in session
types is progress. A global control state $\vect{l}$
of $\system$
is \emph{final} if there is no action $a$ and global control state $\vect{l}'$ such that 
$(\vect{l},a,\vect{l}')\in\delta_{\system}$.
A configuration $\aconf=(\vect{l},\vect{b})$ 
of $\system$ \emph{satisfies progress} if either $\vect{l}$
is final or
there is a configuration
$\aconf'$ and an action $a$
such that $\transition{\aconf}{a}{\aconf'}{\system}$.
A system satisfies progress if all reachable configurations
satisfy progress. It can be observed
that the set $NP(\system)$ of configurations that do not
satisfy progress is regular and polynomial time computable.

\subsection{Boundedness}\label{sec:boundedness}
There are other examples of properties
that are regular and polynomial time, but some interesting
ones are not safety properties. To conclude
this section, we consider one of these properties.

\begin{definition}[Boundedness]\label{def:boundedness}
    Let $\system$ be a system and $k\geq 0$. 
    A channel $i\in I$ is $k$-\emph{bounded} if for all $\decconf \in \erress{\system}$ $|b_i| \leq k$.
    $\system$ is $k$-bounded if for all $i \in I$, $i$ is $k$-bounded.
    \end{definition}
    
    \begin{theorem}
        Whether there exists $k\geq 0$ such that
        a greedy system $\system$ is $k$-bounded
        is decidable in polynomial time. Moreover, $k$
        is computable in polynomial time.
    \end{theorem}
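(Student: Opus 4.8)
The plan is to reduce the boundedness question for a greedy system to a question about its greedy executions, which we already know form a regular language via the automaton $\A_{gr}$ of Lemma~\ref{lem:greedy-execution-is-regular}. The key observation is that, because $\system$ is greedy, every reachable configuration is reachable by a greedy execution, and in a greedy execution each matched send is immediately consumed, so the content of a buffer in the final configuration is exactly the sequence of \emph{unmatched} sends on that buffer. Consequently, the maximal buffer size $|b_i|$ over all reachable configurations equals the maximal number of pending (unmatched) sends on buffer $i$ that can accumulate along a greedy execution. The system is $k$-bounded precisely when this supremum is finite and at most $k$ for every $i$, so the problem becomes: is the number of pending sends per buffer bounded along greedy executions, and if so, by how much?

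The natural tool is to augment $\A_{gr}$ with a counter, or rather with a pumping-style analysis on the number of simultaneously pending sends. First I would modify the construction of $\A_{gr}$ so that the state component $2^I$ tracking which buffers are non-empty is refined to record, for each buffer $i$, the current count of unmatched sends on $i$; call the resulting (infinite-state) transition system $\mathcal{C}$. Boundedness of buffer $i$ is then equivalent to the counter for $i$ being bounded along all runs of $\mathcal{C}$. Since $\mathcal{C}$ is a monotone counter system whose control is the finite automaton $\A_{gr}$ (minus the counter), I would argue that the counter for $i$ is unbounded if and only if there is a reachable control state of $\A_{gr}$ lying on a cycle that strictly increases the count for $i$ without ever being forced to decrease it below its entry value --- a standard reachable-positive-cycle criterion. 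Because the underlying control automaton $\A_{gr}$ is of the size stated in Lemma~\ref{lem:greedy-execution-is-regular}, detecting such a reachable increasing cycle is a polynomial-time graph-reachability computation on $\A_{gr}$, giving decidability of boundedness in polynomial time.

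To extract the actual bound $k$ when the system is bounded, I would observe that if no increasing-cycle exists for any buffer, then along any greedy execution the count of pending sends on each buffer can be pumped out: any run long enough to exceed a threshold polynomial in $|\A_{gr}|$ must repeat a control state without net increase, so the maximal attainable count is bounded by a quantity polynomial in the size of $\A_{gr}$, and hence computable in polynomial time by a longest-gain computation (a shortest/longest-path style computation of the maximal achievable counter value on the finite acyclic-after-saturation control graph). Taking the maximum over all $i\in I$ yields $k$.

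The main obstacle, I expect, is the step justifying that $|b_i|$ in reachable configurations is \emph{exactly} the count of pending sends in a greedy execution, and that restricting attention to greedy executions loses no reachable configuration with a larger buffer. This needs the greediness hypothesis together with the fact (noted after Definition~\ref{def:agraph}) that causally equivalent executions reach the same configuration: given any execution reaching $\aconf$, greediness yields a causally equivalent greedy execution reaching the same $\aconf$, and in the greedy execution the pending-send count on $i$ equals $|b_i|$ because matched sends are consumed by the immediately following receive. The delicate point is that the reachability of configurations with partially-filled buffers (not just the "stable" ones) is still captured, which is why $\A_{gr}$ is defined to also accept after one trailing unmatched action and why the counter abstraction must track intermediate peaks rather than only end values.
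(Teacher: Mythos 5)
Your proposal is correct and is essentially the paper's proof: the paper makes the same key observation (every reachable configuration of a greedy system is reached by a greedy execution, whose buffer contents are exactly the unmatched sends), and it decides boundedness by applying to the automaton of Lemma~\ref{lem:greedy-execution-is-regular} the morphism erasing matched communications and testing acyclicity of the pruned automaton --- which is precisely your reachable-increasing-cycle criterion on a monotone counter abstraction --- with $k$ obtained as the maximal word length, i.e., your longest-gain computation. The only differences are cosmetic: the paper tracks one global count of unmatched communications rather than per-buffer counters (so your per-channel $k$ is slightly tighter), and the trailing letter $\A_{gr}$ accepts is a receive action rather than an unmatched send, though both arguments in fact rely on the same straightforward variant of $\A_{gr}$ that accepts the greedy executions themselves.
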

    
    \begin{proof}
        Let $\Sigma_{!}\subseteq \Sigma$ be the set of unmatched
        communications, and $\sigma:\Sigma^*\to\Sigma^*_{!}$ the
        morphism that erase all matched communications.
        By Lemma~\ref{lem:greedy-execution-is-regular}, and by
        the closure of regular languages under morphisms,
        there is a polynomial time computable non-deterministic 
        finite state 
        automaton $\A$ such that 
        $\languageof{\A}=\{\sigma(e)\mid e \in\executionsof{\system}
        \mbox{ is greedy}\}$. 
        Then $\system$ is $k$-bounded for some $k$
        if and only if $\languageof{\A}$ is finite, or 
        equivalently if and only if $\A$, once pruned
        (removing states that are not reachable from the initial 
        state and co-reachable from a final state), is acyclic.This is decidable in time $\bigO{|\A|}$,
        and the maximal length $k$ of a word of 
    $\languageof{\A}$ also is computable in time $\bigO{|\A|}$.
\end{proof}

\section{Mailbox Half-Duplex Systems}\label{sec:mailbox-hd}

Binary half-duplex systems, (called
simply half-duplex by Cece and 
Finkel~\cite{DBLP:journals/iandc/CeceF05}) are binary
systems such that all reachable configurations
$(l_1,l_2,b_1,b_2)$ are such that either $b_1=\epsilon$
or $b_2=\epsilon$. In the previous
sections, we established that greedy systems
enjoy the same decidability and complexity results
as binary half-duplex systems.
In this section, we defend the claim that greedy systems 
could also merit the name of multiparty \halfduplex systems.

First, observe that binary half-duplex systems are
greedy (see~\cite[Lemma~20]{DBLP:journals/iandc/CeceF05}).
The converse does not hold in general: some binary greedy
systems are not half-duplex. However, under an extra
hypothesis, both are equivalent. A system is called
\emph{without orphan messages} if for all reachable
configuration containing a message in a buffer,
it is possible to reach a configuration where this message
has been received. This property is also enforced
by session types and is very natural in communicating systems.
Then observe that, for a given a binary system $\system$
without orphan messages, the following two are equivalent:
(1) $\system$ is binary half-duplex, and (2) $\system$ is 
greedy.

Let us now consider multiparty systems. 
Cece and Finkel proposed two notions of multiparty
half-duplex systems, but conclude that they were not
well behaved (one being too restrictive, and the other
Turing powerful).
Both of these generalisations relied on peer-to-peer communication.  
We propose to consider mailbox communication instead. 

\begin{definition}[Half-duplex execution]
	Fix a mailbox system $\system$.
	An execution 
	$e = \transition{(\globalstate{l_0}, \B_0)}{a_1}{(\globalstate{l_1}, \B_1)}{}\xrightarrow{}\cdots \xrightarrow{a_n}(\globalstate{l_n}, \B_n)$
	is \emph{\halfduplex} if for all 
	$j=1,\ldots,n$, if $a_j$ is a send action, then $b_p^{i-1} = \varepsilon$, where $p=\processof{a_j}$.
\end{definition}
Intuitively, an execution is half-duplex if every process
empties its queue of messages before sending.

\begin{definition}[Mailbox half-duplex system]
	A mailbox system $\system$ is mailbox \halfduplex if for all 
	execution $e\in\executionsof{\system}$, there is
	a half-duplex execution $e'$ such that $e\causalequiv{\system} e'$.
\end{definition}

\begin{example}
The system in Figure \ref{fig:ex:csd} is \halfduplex.
Indeed even if execution $e$ of Example \ref{ex:actconfgraph} is not \halfduplex, by considering one of its greedy equivalents $e''$:
$$e'' = \csend{c}{s}{req} \cdot \crec{c}{s}{req} \cdot \csend{s}{c}{res} \cdot \crec{s}{c}{res} \cdot
\csend{c}{s}{ack_s} \cdot \crec{c}{s}{ack_s} \cdot \csend{c}{d}{log_c} \cdot
\crec{c}{d}{log_c} \cdot \csend{d}{c}{ack_d} \cdot \crec{d}{c}{ack_d}  \cdot \csend{s}{d}{log_s} \cdot \crec{s}{d}{log_s}\cdot \csend{c}{s}{req}$$
 we obtain a \halfduplex execution. 
 Notice that execution $e'$ of Example~\ref{ex:greedysys} is greedy but not \halfduplex   as the send of message $log_s$ is done when the buffer of the Server is filled with message $req$. \qed
\end{example}

Note that 
a binary system is mailbox half-duplex if and only
if it is binary half-duplex. In the remainder, we therefore
sometimes say simply half-duplex instead of binary half-duplex
or mailbox half-duplex.

\begin{theorem}\label{th:1synch}
Mailbox half-duplex systems are greedy.
\end{theorem}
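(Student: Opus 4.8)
The plan is to prove the sharper statement that \emph{every half-duplex execution has an acyclic conflict graph}, and then to conclude via Lemma~\ref{lem:greedy-graphical-caracterisation}. Indeed, given a mailbox half-duplex system $\system$ and an arbitrary $e\in\executionsof{\system}$, the definition of a mailbox half-duplex system supplies a half-duplex execution $e'$ with $e\causalequiv{\system}e'$; once we know that $\conflictgraph{e'}$ is acyclic, Lemma~\ref{lem:greedy-graphical-caracterisation} makes $e'$ causally equivalent to a greedy execution, and transitivity of $\causalequiv{\system}$ transfers this property to $e$. Since $e$ is arbitrary, $\system$ is greedy.

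For the sharper statement, to each communication $\acom\in\communicationsof{e'}$ I associate the index $s(\acom)$ of its send action. This is well defined because every reception is matched (a receive step requires the message to be already present), so every communication contains exactly one send, which is moreover its least index, $s(\acom)=\min\acom$. The heart of the argument is the claim that for every conflict edge $\acom\conflictedge{e'}\acom'$ between distinct communications one has $s(\acom)<s(\acom')$. Granting this, a cycle $\acom_1\conflictedge{e'}\cdots\conflictedge{e'}\acom_m\conflictedge{e'}\acom_1$ would force the strict chain $s(\acom_1)<\cdots<s(\acom_m)<s(\acom_1)$, a contradiction; hence $\conflictgraph{e'}$ is acyclic.

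It remains to establish the claim. Fix a conflict edge witnessed by indices $j\in\acom$ and $j'\in\acom'$ with $j\prec_{e'}j'$; since $\acom\neq\acom'$ the pair $\{j,j'\}$ is not matching, so $a_j$ and $a_{j'}$ do not commute. If $j'$ is the send of $\acom'$, then $s(\acom)\le j<j'=s(\acom')$ and we are done. Otherwise $\acom'$ is a matching pair and $j'$ is its reception, performed by the process $p=\processof{a_{j'}}$, necessarily on buffer $p$ since in a mailbox system $p$ dequeues only from its own buffer. Non-commutation of $a_j$ and $a_{j'}$ then forces $\processof{a_j}=p$ (either directly, or because $a_j$ is a second reception on buffer $p$, which again only $p$ can perform). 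Thus the $p$-action $a_j$ inside $\acom$ is either $p$'s reception $r(\acom)$ on buffer $p$, or $p$'s own send, which must be $s(\acom)$.

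In the first subcase, $r(\acom)$ and $r(\acom')$ are two receptions on buffer $p$ with $r(\acom)<r(\acom')$; by the FIFO matching of sends and receptions on buffer $p$ their sends occur in the same order, whence $s(\acom)<s(\acom')$. The second subcase is where the half-duplex hypothesis is essential, and is the step I expect to be the main obstacle: here $p$ performs its send at index $s(\acom)=j$ strictly before receiving, at $r(\acom')=j'$, the message of $\acom'$. The half-duplex condition demands that buffer $p$ be empty just before $p$'s send at index $j$; but if the message of $\acom'$ had already been sent, i.e. $s(\acom')\le j$, then, being removed only at $r(\acom')>j$, it would still occupy buffer $p$ at that moment, contradicting emptiness. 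Hence $s(\acom')>j=s(\acom)$, which finishes the claim and the theorem. The delicate bookkeeping is precisely this occupancy argument — pinning down that the conflicting message sits in buffer $p$ in exactly the configuration where half-duplex asserts emptiness — whereas the reduction to acyclicity and the FIFO subcase are routine.
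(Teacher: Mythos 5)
Your proof is correct, and it rests on the same three ingredients as the paper's: the reduction to acyclicity of the conflict graph via Lemma~\ref{lem:greedy-graphical-caracterisation}, mailbox semantics plus FIFO order for receive--receive conflicts, and the half-duplex emptiness condition for send--receive conflicts. The organisation, however, is genuinely different. The paper argues by contraposition: given a cycle $c_1\conflictedge{e'}\cdots\conflictedge{e'}c_n\conflictedge{e'}c_1$, it singles out (after a circular permutation) the communication $c_1$ whose send index is minimal, and performs a four-way case analysis (SS, RS, RR, SR) on the single edge $c_n\conflictedge{e'}c_1$, ruling out the first three and extracting from the fourth the non-half-duplex pattern $\send{p}{q}{\amessage_1}\ldots\send{q}{r}{\amessage_2}\ldots\rec{p}{q}{\amessage_1}$. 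You instead prove the uniform invariant that in a half-duplex execution the send index strictly increases along \emph{every} conflict edge, which forbids all cycles at once and dispenses with the minimality bookkeeping; your subcases map exactly onto the paper's (target $j'$ a send covers SS and RS; your FIFO subcase is RR; your occupancy argument is SR, and it is carried out correctly, since the message of $\acom'$ is enqueued in buffer $p$ strictly before $p$'s send at $j$ and dequeued only at $j'>j$, witnessing non-emptiness exactly where half-duplex demands $b_p=\varepsilon$). Your formulation is arguably cleaner: the strictly increasing send index is a potential function on the conflict graph, and it also makes explicit the topological order by send index that underlies greedy executions in Lemma~\ref{lem:greedy-graphical-caracterisation}. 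Two harmless conventions you share with the paper and correctly justify rather than assume: every reception in a valid FIFO execution is matched (so $s(\acom)=\min\acom$ is well defined), and conflict edges in a cycle are taken between distinct communications, since a literal reading of the edge definition would give every matched pair a self-loop.
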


\begin{proof}
We reason by contradiction. 
Assume that $\system$ is not greedy, we show that $\system$ is not half-duplex.
Let $e\in\executionsof{\system}$ be any execution that
is not causally equivalent to a greedy execution
(for instance, take for $e$ a borderline violation).
We claim that for all $e'$ such that $e\causalequiv{\system} e'$,
$e'$ is not half-duplex.  
Since $e'$ is not causally equivalent to a greedy
execution, we get, by Lemma~\ref{lem:greedy-graphical-caracterisation}, that
$\conflictgraph{e'}$ contains a cycle of communications
$c_1\conflictedge{e'} c_2\conflictedge{e'}\ldots\conflictedge{e'} c_n\conflictedge{e'} c_1$ where for all $i=1,\ldots n$,
either $c_i=\{j_i,k_i\}$ is a matching pair,
or $c_i=\{j_i\}$ is an unmatched send. We assume that
$j_i<k_i$, i.e. $j_i$ is the index of the send action
and $k_i$ the index of the receive action.
Up to a circular permutation, we can also assume, 
without loss of generality, 
that $j_1$ is the first send among them
in $e'$, i.e. $j_1<j_{\ell}$ for all $\ell=2,\ldots,n$.
Now, let us reason by case analysis on the nature of the
conflict edge $c_n\conflictedge{e'}c_1$.
\begin{itemize}
	\item case ``$c_n\lconflictedge{SS} c_1$": $j_n \prec_{e'}
	j_1$. Then $j_n<j_1$, contradicts the minimality of $j_1$.
	Impossible.
	\item case ``$c_n\lconflictedge{RS} c_1$": $k_n \prec_{e'}
	k_1$. Then $j_n<k_n<j_1$, impossible.
	\item case ``$c_n\lconflictedge{RR} c_1$": $k_n \prec_{e'}
	k_1$. Then $k_n<k_1$ and either (1)
	$\processof{a_{k_n}}=\processof{a_{k_1}}$ 
	or (2) there is $i\in I$, $\amessage,\amessage'\in\paylodSet$
	such that $a_{k_n}=i?\amessage$ and $a_{k_1}=i?\amessage'$.
	Because of the mailbox semantics, (1) and (2) are equivalent,
	so (2) is granted. But then $a_{j_n}=i!\amessage$
	and $a_{j_1}=i!\amessage'$. Since $e'$ is a FIFO
	execution, and $k_n<k_1$, we get that
	$j_n<j_1$, and again the contradiction.
	\item case ``$c_n\lconflictedge{SR} c_1$": $j_n \prec_{e'}
	k_1$. Then $j_n<k_1$, and $\processof{a_{j_n}}=\processof{a_{k_1}}$. Moreover, $j_1<j_n$ by the minimality of $j_1$.
	To sum up, let $p,q,r,\amessage_1,\amessage_2$ be such
	that $a_{j_1}=\send{p}{q}{\amessage_1}$, 
	$a_{k_1}=\rec{p}{q}{\amessage_1}$, and $a_{j_n}=\send{q}{r}{\amessage_2}$. Then we just showed that 
	$e'=\ldots \send{p}{q}{\amessage_1}\ldots\send{q}{r}{\amessage_2}\ldots\rec{p}{q}{\amessage_1}\ldots$,
	so $e'$ is not a half-duplex execution.
\end{itemize}
\end{proof}

\noindent
\begin{minipage}{.75\columnwidth}
\qquad Notice that the converse of Theorem~\ref{th:1synch} does
not hold:
being greedy is not a sufficient condition to be \halfduplex.
Indeed, an unmatched send can fill the buffer of a process willing to send. More precisely, consider
the \MSC on the right. It depicts a greedy system that is not \halfduplex: in fact the buffer for $q$ is not empty when $\amessage_2$ is sent.
We conjecture that this is the only
pathological situation, and that, like in the binary setting, if
a system is greedy and has no orphan messages, then it is
half-duplex.
\end{minipage}
\begin{minipage}{.25\columnwidth}
\centering
\begin{tikzpicture}
	\node (p) at (0.5, 0) {$p$};
	\node (q) at (2.5, 0) {$q$};
	
	\node[elliptic state] (sv1) at (2.5,-0.75) {$\csend{p}{q}{\amessage_1}$};
	\node[elliptic state] (rv2) at (2.5,-2.25) {$\crec{q}{p}{\amessage_2}$};
	\node[elliptic state] (sv2) at (0.5,-2.25) {$\csend{p}{q}{\amessage_2}$};
		
	\draw[dotted]
	(2,0) -- (2,-3)
	(3,0) -- (3,-3)
	(0,0) -- (0,-3)
	(1,0) -- (1,-3);
	
	\tikzset{->,}
	\draw
	(sv1) edge node{} (rv2)
	(sv2) edge node{} (rv2);

\end{tikzpicture}

\end{minipage}
\vspace{0,2cm}

\begin{figure}[t]
	\centering
	\begin{tikzpicture}
	\tikzset{->,}
	\node[state, initial] (c0) {$0$};
	\node[state, right=1.8cm of c0] (c2) {$2$};
	\node[state, above of=c0] (c1) {$1$};
	\node[left=1cm of c1] (c) {\Large \underline{$Client$}};
	\node[state, below=1.8cm of c2] (c3) {$3$};
	\node[state, below=1.8cm of c0] (c4) {$4$};
	
	\node[below=0.9cm of c2] (spacer1) {};
	\node[right=2.7cm of spacer1] (spacer) {};
	\node[state, initial, right of=spacer] (s0) {$0$};
	\node[state, above=2cm of spacer] (s1) {$1$};
	\node[state, right=1.2cm of s0] (s2) {$2$};
	\node[state, right=2.1cm of s2] (s3) {$3$};
	\node[state, below of=s2] (s4) {$4$};
	\node[state, above of=s2] (s5) {$5$};
	\node[above=0.8cm of s5] (s) {\Large \underline{$Seller$}};
	
	\node[below=2cm of c4] (b) {\Large \underline{$Bank$}};
	\node[state, initial, right of=b] (b0) {$0$};
	\node[state, right=2.1cm of b0] (b1) {$1$};
	\node[state, right=2cm of b1] (b2) {$2$};
	\node[state, right=2cm of b2] (b3) {$3$};
	
	\draw
	(c0) edge[left, bend left] node{$\csend{c}{s}{ask\_price}$} (c1)
	(c0) edge[below, bend left=15] node{$\csend{c}{s}{buy}$} (c2)
	(c1) edge[right, bend left, pos=0.25] node{$\crec{s}{c}{price}$} (c0)
	(c2) edge[right, bend left=15] node{$\crec{b}{c}{ask\_pay}$} (c3)
	(c3) edge[below, bend left=15] node{$\csend{c}{b}{info\_pay}$} (c4)
	(c4) edge[left, bend left=15] node{$\crec{s}{c}{cancel}$} (c0)
	(c4) edge[right, bend right] node{$\crec{s}{c}{conf}$} (c0)
	
	(s0) edge[left, bend left] node{$\crec{c}{s}{ask\_price}$} (s1)
	(s0) edge[above] node{$\crec{c}{s}{buy}$} (s2)
	(s1) edge[right, bend left, pos=0.25] node{$\csend{s}{c}{price}$} (s0)
	(s2) edge[above] node{$\csend{s}{b}{transaction}$} (s3)
	(s3) edge[below right, bend left] node{$\crec{b}{s}{auth}$} (s4)
	(s3) edge[above right, bend right] node{$\crec{b}{s}{refusal}$} (s5)
	(s4) edge[below left, bend left] node{$\csend{s}{c}{conf}$} (s0)
	(s5) edge[below right, bend right] node{$\csend{s}{c}{cancel}$} (s0)
	
	(b0) edge[above] node{$\crec{s}{b}{transaction}$} (b1)
	(b1) edge[above] node{$\csend{b}{c}{ask\_pay}$} (b2)
	(b2) edge[above] node{$\crec{c}{b}{info\_pay}$} (b3)
	(b3) edge[above, bend right=30] node{$\csend{b}{s}{refusal}$} (b0)
	(b3) edge[below, bend left=20] node{$\csend{b}{s}{auth}$} (b0);
	\end{tikzpicture}
	\caption{Client/Seller/Bank}\label{fig:example:csb}
\end{figure}
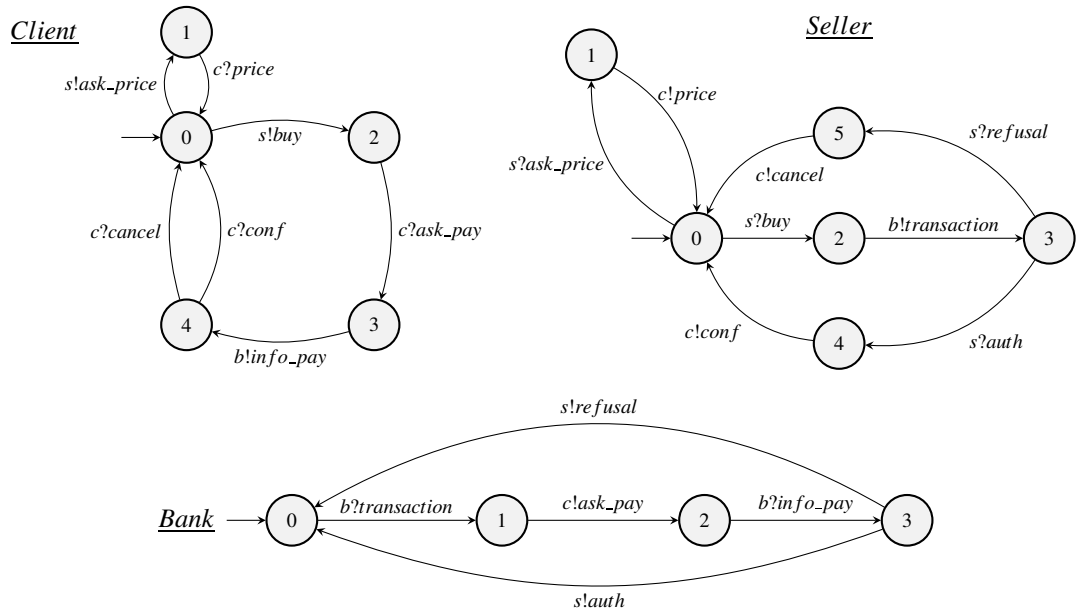

\begin{example}
To finish this section we  present another small example of a 
\halfduplex system: the classic Client/Seller/Bank protocol.
This system is shown in Figure \ref{fig:example:csb}. 
In this protocol a client can ask the price of an item to the 
seller ($ask\_price$ message), and receive the answer. Whenever 
the client agrees on a price it can place an order (via the 
message $buy$). Receiving this message the seller initiates a 
transaction with the bank. The bank asks the client for its 
credentials, and when it receives them it either authorizes or 
refuses the transaction, and
notifies the seller accordingly. The
seller then confirms or cancels the transaction, sending a 
message to the client.
\qed
\end{example}

\section{Conclusion}\label{sec:conclusion}

We have introduced greedy systems, a new class of communicating 
systems, generalising  the notion of \halfduplex systems
to any number of processes, and to an arbitrary model of FIFO 
communication (encompassing both p2p and mailbox communications).
We have shown that the greediness of a system is decidable
in polynomial time, and that for greedy systems regular
safety properties, such as 
reachability, progress, and boundedness are decidable
in polynomial time. Finally, we defined
mailbox half-duplex systems and showed that greedy
systems are intimately related to mailbox half-duplex systems.

Still, the picture is a bit incomplete: we did not
address liveness properties nor more general temporal
properties, and we also did not propose a notion of p2p half-duplex
systems that would enjoy all desirable properties. Also, we did not
report on experimental evaluation. We leave
these questions for future work.


Pachl gave a general decidability result for systems of communicating finite state machines whose reachability
set is regular~\cite{DBLP:journals/corr/cs-LO-0306121}.
Although the algorithm is rather brute-force, the MCSCM tool
illustrates that it is amenable to an efficient CEGAR optimization~\cite{DBLP:conf/tacas/HeussnerGS12}.
Beyond regularity, and relying on visibly pushdown languages, La Torre \emph{et al}~\cite{DBLP:conf/tacas/TorreMP08} established that bounded context-switch reachability is also decidable
in time exponential in the number of states and doubly exponential in the number of switch. We conjecture that bounded context-switch reachability is complete for greedy systems with a number of switch polynomial in the size of the system.
 
Several authors considered communicating systems where queues are not plain FIFO but may over-approximate a FIFO behaviour. 
A representative example of this approach is lossy channels, introduced independently in \cite{DBLP:journals/iandc/CeceFI96} 
and \cite{DBLP:journals/iandc/AbdullaJ96}.
Another example is the bag semantics of buffers where messages 
are received without loss but out of order. Examples of uses of 
bag buffers can be found in \cite{Koushik}.
Both for lossy and bag systems the reachability problem is 
decidable but with a high complexity: 
non primitive recursive for lossy~\cite{DBLP:journals/ipl/Schnoebelen02, DBLP:conf/csl/Schnoebelen21}. The exact 
complexity seems 
unknown for bag systems, but by reduction to Petri nets
it is non-elementary~\cite{DBLP:journals/jacm/CzerwinskiLLLM21}.

Aside bounded context-switch model-checking, another form of bounded model-checking has been promoted by Muscholl \emph{et al.}: the notions of universally and existentially bounded message
sequence charts~\cite{DBLP:journals/iandc/LohreyM04}. This
leads to two notions of universally/existentially bounded 
systems (both having unfortunately the same name), depending 
whether only the MSCs leading to final stable configurations are 
considered~\cite{DBLP:journals/fuin/GenestKM07} or all 
MSCs~\cite{KuskeM2018}. For the latter definition, reachability 
and membership are decidable in PSPACE. Greedy systems are
existentially $1$-bounded, but the converse does not hold
(for instance, the system of Example~\ref{ex:counter-example-greedy})
is existentially $1$-bounded).
A system is 
$k$-synchronous~\cite{DBLP:conf/cav/BouajjaniEJQ18} if the 
message sequence charts
of the system are formed of blocks of $k$ messages.
In particular, a system is $1$-synchronous if the message lines never cross.
For systems with p2p communications, greediness is the same
as $1$-synchronizability. However, for mailbox communications,
some subtle examples are 
$1$-synchronous but not greedy (see 
\emph{e.g.}~\cite[Example~1.2]{DBLP:conf/fossacs/GiustoLL20}). 
For $k$-synchronous systems, reachability is 
decidable in PSPACE. Finally, greedy system are synchronizable
in the sense of Basu and Bultan~\cite{DBLP:journals/tcs/BasuB16}, 
but synchronizability is not decidable~\cite{FL-icalp17}. We 
believe that greediness is the notion that Basu and Bultan were
aiming at with the notion of synchronizability.
It might be wondered if greedy systems were not implicit in
Cece and Finkel's work. Actually, some of their arguments
rely on the fact that for half-duplex systems, every execution
is \textquote{reachability equivalent} to a synchronous execution.
This is not exactly the notion of greedy systems we introduced,
and our notion of greedy systems is closer to Bouajjani~et~al notion 
of $1$-synchronous systems, although, as we just explained, they
are not the same in some corner cases.

Session types are intimately related to half-duplex systems in
the binary setting~\cite{DBLP:conf/wsfm/LozesV11}. Several
multi-party extensions of session types have been proposed,
the last proposal being~\cite{DBLP:journals/pacmpl/ScalasY19}.
It seems there are also some similarities between multiparty session
types and greedy systems. For instance, the notion of greedy
execution shares some similarities with the notion of
alternation in~\cite{DBLP:conf/esop/DenielouY12}.
The study of the exact relationship between greedy systems and 
multi-party session typed systems is left for future work. 

\paragraph{Acknowledgements.} We would like to thank all the ICE reviewers for their comments
that greatly improved the present paper.

\bibliographystyle{eptcs}
 \bibliography{biblio}

\end{document}